\newcommand{\removelatexerror}{\let\@latex@error\@gobble}
\newcommand{\beq}{\begin{equation}}
\newcommand{\eeq}{\end{equation}}
\newcommand{\rr}{{\mathbb R}}
\newcounter{algorithmctr}[section]
\renewcommand{\thealgorithmctr}{\thesection.\arabic{algorithmctr}}
   {\refstepcounter{algorithmctr}\begin{list}{}{%
       \setlength{\rightmargin}{0\linewidth}%
       \setlength{\leftmargin}{.05\linewidth}
        \setlength{\itemsep}{1pt}
  \setlength{\parskip}{0pt}
  \setlength{\parsep}{0pt}}%
       \rmfamily\small
       \item[]{\setlength{\parskip}{0ex}\hrulefill\par%
        \nopagebreak{\bfseries\textsf{Algorithm \thealgorithmctr~}}}}%
   {{\setlength{\parskip}{-1ex}\nopagebreak\par\hrulefill} \end{list}}
\newtheorem{assumption}{Assumption}
\newtheorem{prop}{Proposition}
\newtheorem{remark}{Remark}
\title{\LARGE \bf Simple Policy Evaluation for Data-Rich Iterative Tasks}
\author{Ugo Rosolia, Xiaojing Zhang, and Francesco Borrelli
\thanks{U.\ Rosolia, X.\ Zhang and F.\ Borrelli are with the Department of Mechanical Engineering, University of California at Berkeley ,
        Berkeley, CA 94701, USA
        {\tt\small\{ugo.rosolia, xiaojing.zhang,\ fborrelli\}{@}berkeley.edu}}%
}
\begin{document}

\maketitle
\thispagestyle{empty}
\pagestyle{empty}

\begin{abstract}
A data-based policy for iterative control task is presented. The proposed strategy is model-free and can be applied whenever safe input and state trajectories of a system performing an iterative task are available. These trajectories, together with a user-defined cost function, are exploited to construct a piecewise affine approximation to the value function. The approximated value function is then used to evaluate the control policy by solving a  linear program. We show that for linear system subject to convex cost and constraints, the proposed strategy guarantees closed-loop constraint satisfaction and performance bounds for the closed-loop trajectory. We evaluate the proposed strategy in simulations and experiments, the latter carried out on the Berkeley Autonomous Race Car (BARC) platform. We show that the proposed strategy is able to reduce the computation time by one order of magnitude while achieving the same performance as our model-based control algorithm.
\end{abstract}

\section{Introduction}








In the past decades researchers have focused on iterative strategies to synthesize control policy \cite{bristow2006survey, c33, c34, c35, c6, c4, liu2013nonlinear, LMPC, mania2018simple, schulman2015trust, wu2017scalable, mohajerin2018infinite}. The main idea is to execute the control task or a part of it repeatedly, and use the closed-loop data to automatically update the control policy. Each task execution is often referred to as ``trials" or ``iterations" and it may be performed in simulation or experiment. It is generally required that at each update the control policy guarantees safety. Furthermore, it is desirable that the closed-loop performance improves at each policy update and that the iterative scheme converges to a (local) optimal steady state behavior. Algorithms that iteratively update the control policy and satisfy the above properties have been extensively studied in the literature.

Iterative Learning Control (ILC) is a control strategy that allows learning from previous iterations to improve the closed-loop tracking performance \cite{bristow2006survey}. In ILC, at each iteration, the system starts from the same initial condition and the controller objective is to track a given reference, rejecting periodic disturbances. The main advantage of ILC is that information from previous iterations are incorporated into the problem formulation at the next iteration, in order to improve the control policy while guaranteeing safety. Furthermore, it is possible to show that as the number of iteration increases the control policy converges to a steady state (local) optimal behavior \cite{c33, c34, c35, c6, c4, liu2013nonlinear}. Recently, we proposed an ILC algorithm called Learning Model Predictive Control (LMPC), where the controller's goal is to minimize a generic positive definite cost function \cite{LMPC}. At each time, the LMPC solves a finite time optimal control problem, where the data from previous iterations are used to update the terminal constraint and terminal cost, which approximates the value function. In the above mentioned ILC schemes, the data from each iteration are used to update the control policy while guaranteeing safety and performance improvement. However, the computational complexity of these algorithms does not decrease when the policy update has converged, although the controller applies the same or similar control actions at each iteration. Indeed, evaluating the control policy involves the solution to a model-based optimization problem. In this work we propose a model-free data-based policy, which may be used to reduce the computational burden of ILC algorithms which have reached convergence.

Model-free iterative algorithms, such as policy search and $Q$-learning, have recently gained popularity. In policy search, the control policy is updated using derivative-free optimization \cite{recht2018tour} or gradient estimation \cite{schulman2015trust}. These algorithms have been successfully tested in simulation scenarios to perform complex locomotions tasks. For more details we refer to \cite{recht2018tour, mania2018simple, schulman2015trust, wu2017scalable, mohajerin2018infinite}. $Q$-learning is an approximate dynamic programming strategy where an optimal cost function for a state input pair is learned from data \cite{bertsekas2005dynamic, recht2018tour}. The optimal cost function is usually approximated using a linear mapping of a state dependent feature vector. These features may be arbitrary nonlinear functions of the states, see \cite[Chapter VI]{bertsekas2005dynamic} for details. In $Q$-learning, the policy is evaluated minimizing the approximated value function at the current state with respect to the control input \cite[Chapter VI]{bertsekas2005dynamic}, \cite{recht2018tour}.

In all the aforementioned literature, it is important to distinguish  between the strategy used to update the control policy and the method used to evaluate the current policy. This paper focuses on the latter problem.  We propose a simple, perhaps the simplest, value function approximation strategy, which may be used to compute a control law from historical state-input data, regardless  on the techniques used to generate the data. We build on~\cite{linearLMPC} where we exploit stored input and state trajectories along with a user-defined cost to construct a piecewise-affine approximation of the value function. The value function approximation is defined as a convex combination of the cost associated with the stored closed-loop trajectories. In the present work, we propose to exploit the multipliers from the convex combination of the cost to extract the control action from the stored inputs. The proposed strategy needs to store the input and state trajectories and may not be applied when limited memory storage is available. Furthermore, we proposes a local approximation of the value function, which allows to further reduce the computational burden of the proposed policy evaluation method. Finally, we show that for linear systems subject to convex cost and convex constraints, the data-based policy guarantees safety, stability and performance bounds. We evaluate the proposed strategy on the Berkeley Autonomous Race Car (BARC) platform, and demonstrate that the data-based policy is able to match the performance of our model-based ILC algorithm, while being almost $30$x faster at computing the control inputs.

The paper is organized as follows: in Section II we introduce the problem formulation. In Section III we describe the proposed approach. First we show how to use data to construct the safe set and the value function approximation. Afterwards, we introduce the control design. The properties of the proposed approach are discussed in Section IV. Finally, in Section V we test the proposed data-based policy on simulation and experiment, the latter on the Berkeley Autonomous Race Car (BARC) platform.

\section{Problem Formulation}
Consider the unknown deterministic system
\begin{equation} \label{eq:System}
    x_{t+1} = A x_t + B u_t
\end{equation}
where $x_t \in \mathbb{R}^n$ and $u_t \in \mathbb{R}^d$ are the system's state and input, respectively. Furthermore, the system is subject to the following state and input constraints,
\begin{equation} \label{eq:stateInputConstr}
    x_t \in \mathcal{X} \text{ and } u_t \in \mathcal{U}, ~\forall t \in \{0,\ldots, T \}
\end{equation}
where $T$ is the time as which the control task is completed.

In the following we assume that closed-loop state and input trajectories starting at different initial states $x_0$ are stored. In particular, 
for $j \in \{0, \ldots, M\}$ we are given the following input sequences 
\begin{equation}\label{eq:givenIinputs}
\begin{aligned}
    {\bf{u}}^j &= [u_0^j , \ldots, u_{T_j}^j]
\end{aligned}
\end{equation}
and the associated closed-loop trajectories
\begin{equation}\label{eq:givenClosedLoop}
\begin{aligned}
    {\bf{x}}^j &= [x_0^j , \ldots, x_{T_j}^j] \\
\end{aligned}
\end{equation}
where $x_{t+1}^j = A x_t^j + B u_t^j$ and $T_j$ is the time at which the task is completed. 
These trajectories will be used to design a data-based policy for the unknown system \eqref{eq:System}.

Finally, we defined the cost-to-go associated with the $j$th closed-loop trajectory
\begin{equation} \label{eq:RelalizedCost}
    J^j\big(x_0^j\big)= \sum_{k = 0}^{T_j} h(x^j_k, u^j_k),
\end{equation}
where $x_k^j$ and $u_k^j$ are the stored state and applied input to system \eqref{eq:System} at time $k$ of the $j$th iteration.



\begin{assumption}\label{ass:feasibility}
All $M+1$ input and state sequences in \eqref{eq:givenIinputs}-\eqref{eq:givenClosedLoop} are feasible and known. Furthermore, assume that the state sequence in~\eqref{eq:givenClosedLoop} converges to the origin and the terminal input $u_{T_j}^j=0$.
\end{assumption}

\begin{remark}
We have decided to focus on the linear systems \eqref{eq:System} as this will allow us to rigorously characterize the properties of the proposed approach. However, we underline that the computational cost associated with the proposed strategy is independent on the linearity of the controlled system. Thus, the proposed strategy can be implemented also on nonlinear systems as shown in Section~\ref{sec:expResults}.
\end{remark}

\begin{remark}
We have decided to consider a regulation problem to streamline the presentation of the paper. In the Appendix, we show that the proposed strategy can be used to steer system~\eqref{eq:System} to a terminal control invariant set $\mathcal{X}_F$,  without losing guarantees on safety and performance.
\end{remark}

\section{Proposed Approach}
In this section we describe the proposed approach. First, we introduce the sampled safe set and value function approximation computed from data, which were first introduced in  \cite{LMPC} and \cite{linearLMPC}. Afterward, we show how these quantities are used to evaluate the data-based policy.

\subsection{Safe Set}
We define the collection of the $M$ closed-loop trajectories in \eqref{eq:givenClosedLoop} as the sampled \textit{Safe Set}, 
\begin{equation} 
    \mathcal{SS} = \bigcup_{j = 0}^M \bigcup_{t = 0}^{T_j} x_t^j. \notag
\end{equation}
Notice that for all $x \in \mathcal{SS}$, it exists a sequence of control actions that can steer the system to the origin \cite{LMPC}. Finally, we define the \textit{convex safe set} $\mathcal{CS}$ as
\begin{equation}\label{eq:CS}
    \mathcal{CS} = \text{Conv}\big( \mathcal{SS} \big).
\end{equation}
$\mathcal{CS}$ will be used in the next section to defined the domain of the approximation to the value function.

\subsection{Q-function}
In this section we show how the stored data in~\eqref{eq:givenIinputs} and \eqref{eq:givenClosedLoop} are used to approximate the value function. First, given the stored states ${\bf{x}}^j$ and inputs ${\bf{u}}^j$ for $j \in \{0, \ldots, M\}$, we define the cost-to-go associated with each stored state $x_k^j$,
\begin{equation}
    J^j_k(x_k^j) = \sum_{i=k}^{T_j} h(x_i^j, u_i^j). \notag
\end{equation}
The realized cost-to-go $J^j_k(x_k^j)$ is used to compute the \textit{Q-function} defined as
\begin{equation}\label{eq:valueFunc}
    \begin{aligned}
        Q(x) = \min_{ \bm\lambda \geq 0} \quad & \sum_{j=0}^{M}\sum_{k=0}^{T_j} \lambda_k^{j} J^j(x_k^j) \\
        \text{s.t.}\quad  &\sum_{j=0}^{M}\sum_{k=0}^{T_j} \lambda_k^{j} = 1,\\
        &\sum_{j=0}^{M}\sum_{k=0}^{T_j} \lambda_k^{j} x_k^j = x.
    \end{aligned}
\end{equation}
where $\bm\lambda = [\lambda_{0}^0, \ldots,\lambda_{T_0}^0,  \ldots, \lambda_{0}^M,, \ldots,\lambda_{T_M}^M]$. The Q-function $Q(\cdot)$ interpolates the realized cost-to-go over the convex safe set. Moreover, we underline that Problem~\eqref{eq:valueFunc} is a parametric LP and therefore $Q(x)$ is a  piecewise affine function of $x$ \cite{borrelli2017predictive}. Finally, we notice that the domain of $Q(\cdot)$ is the convex safe set $\mathcal{CS}$, indeed $\forall x \notin \mathcal{CS}$ the optimization problem \eqref{eq:valueFunc} is not feasible.

\subsection{Data-Based Policy}
We are finally ready to introduce the data-based policy. At each time $t$, we evaluate the approximation to the value function \eqref{eq:valueFunc} at the current state $x_t$, solving the following optimization problem,
\begin{equation}\label{eq:valueFuncEval}
    \begin{aligned}
        Q(x_t) = \min_{\bm\lambda_t \geq 0} \quad & \sum_{j=0}^{M}\sum_{k=0}^{T_j} \lambda_{k|t}^{j} J_k^j(x_k^j) \\
        \text{s.t.}\quad &\sum_{j=0}^{M}\sum_{k=0}^{T_j} \lambda_{k|t}^{j} = 1,\\
        &\sum_{j=0}^{M}\sum_{k=0}^{T_j} \lambda_{k|t}^{j} x_k^j = x_t. 
    \end{aligned}
\end{equation}
where $\bm\lambda_t = [\lambda_{0|t}^0, \ldots,\lambda_{T_0|t}^0,\ldots, \lambda_{0|t}^M, \ldots
\lambda_{T_M|t}^M]$.\\ Let 
\begin{equation}\label{eq:optimalSol}
    \bm\lambda_t^* = [\lambda_{0|t}^{0,*}, \ldots, \lambda_{k|t}^{j,*}, \ldots, \lambda_{T_M|t}^{M,*}]
\end{equation}
be the optimal solution at time $t$ to \eqref{eq:valueFuncEval}, then we apply to system \eqref{eq:System} the following input
\begin{equation}\label{eq:policy}
    u_t = \pi(x_t) = \sum_{j=0}^{M}\sum_{k=0}^{T_j} \lambda_{k|t}^{j,*} u_k^j.
\end{equation}
Basically, the data-based policy \eqref{eq:valueFuncEval} and \eqref{eq:policy} computes the control input $u_t$ as the weighted sum of stored inputs, where the weights are the solution to the minimization problem \eqref{eq:valueFuncEval}. 

\subsection{Local Data-Based Policy}
In this section we propose a Local Data-Based policy which can be used to limit the computational burden of problem \eqref{eq:valueFuncEval}, when a considerable amount of data is given. First, we define the local $Q$-function $Q_L(\cdot)$ as
\begin{equation}\label{eq:localValueFunc}
    \begin{aligned}
        Q_L(x_t) = \min_{ \bm \lambda_t \geq 0 } \quad & \sum_{j=0}^{M}\sum_{k \in \mathcal{K}^j(x)} \lambda_{k|t}^{j} J_k^j(x_k^j) \\
        \text{s.t.}\quad  & \sum_{j=0}^{M}\sum_{k \in \mathcal{K}^j(x)} \lambda_{k|t}^{j} = 1,\\
        &\sum_{j=0}^{M}\sum_{k \in \mathcal{K}^j(x)} \lambda_{k|t}^{j} x_k^j = x_t
        \end{aligned}
\end{equation}
where $\bm \lambda_t = [\lambda_{t^{0,*}_1|t}^0, \ldots,\lambda_{t^{0,*}_N|t}^0, \ldots, \lambda_{t^{M,*}_1|t}^0, \ldots, \lambda_{t^{M,*}_N|t}]$. The elements of the set $\mathcal{K}^j(x) = \{t^{j,*}_1, \ldots, t^{j,*}_N\}$ are defined as
\begin{equation*}
\begin{aligned}
    [t^{j,*}_1, \ldots, t^{j,*}_N] = \arg \min_{\mathbf t}   & \quad \sum_{l = 1}^N ||x^j_{t_l} - x||_2 \\
    \text{s.t.}  & \quad t_i \neq t_j,~ \forall i \neq j \\
    & \quad t_i \in \{0, \ldots, T_j \}, \forall i \in \{0, \ldots,N \}.
\end{aligned}
\end{equation*}
For the $j$-th trajectory, the set $\mathcal{K}^j(x)$ collects the indices of the $N$ closest point to the state $x$. Notice that $N \leq \max_{i\in\{0,\ldots,j\}} T_i$ is a user-defined parameter.

Finally, we define the local data-based policy where at each time $t$ we solve $Q_L(x_t)$ in~\eqref{eq:localValueFunc}. Then, given the optimal solution $ \bm \lambda_t^*$ to Problem~\eqref{eq:localValueFunc}, we apply the following 
input 
\begin{equation}\label{eq:localPolicy}
        u_t = \pi(x_t) = \sum_{j=0}^{M}\sum_{k \in \mathcal{K}^j(x_t)} \lambda_{k|t}^{j,*} u_k^j
\end{equation}
to system \eqref{eq:System}.

\section{Properties}
In this section we analyze the properties of the proposed data-based policy  \eqref{eq:valueFuncEval} and \eqref{eq:policy}. We show that the proposed strategy guarantees safety, closed-loop stability and  performance bounds.

\begin{prop}
\textit{(Feasibility)} Consider the closed-loop system~\eqref{eq:System} and \eqref{eq:policy}. Let Assumptions~\ref{ass:feasibility} hold and $\mathcal{CS}$ be the convex safe set defined in \eqref{eq:CS}. If the initial state $x_0 \in \mathcal{CS}$. Then, the data-based policy \eqref{eq:valueFuncEval} and \eqref{eq:policy} is feasible for all time $t\geq0$.
\end{prop}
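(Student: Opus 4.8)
The plan is to prove feasibility by induction on $t$, showing that at every time step the optimization problem \eqref{eq:valueFuncEval} has a nonempty feasible set, which by construction means the current state $x_t$ lies in the convex safe set $\mathcal{CS}$. The induction hypothesis I would carry forward is precisely: $x_t \in \mathcal{CS}$. The base case $t = 0$ is given by assumption. For the induction step, I would assume $x_t \in \mathcal{CS}$, so that \eqref{eq:valueFuncEval} is feasible and admits an optimal multiplier vector $\bm\lambda_t^*$ satisfying $\sum_{j,k}\lambda_{k|t}^{j,*} = 1$, $\lambda_{k|t}^{j,*}\ge 0$, and $\sum_{j,k}\lambda_{k|t}^{j,*} x_k^j = x_t$. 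I then apply the input \eqref{eq:policy}, $u_t = \sum_{j,k}\lambda_{k|t}^{j,*} u_k^j$, and must show that the successor state $x_{t+1} = A x_t + B u_t$ again lies in $\mathcal{CS}$.

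The key computation is to exhibit an explicit feasible multiplier vector for the problem at time $t+1$. Using linearity of the dynamics \eqref{eq:System},
\begin{equation*}
x_{t+1} = A x_t + B u_t = \sum_{j=0}^{M}\sum_{k=0}^{T_j}\lambda_{k|t}^{j,*}\big(A x_k^j + B u_k^j\big) = \sum_{j=0}^{M}\sum_{k=0}^{T_j}\lambda_{k|t}^{j,*} x_{k+1}^j,
\end{equation*}
where $x_{k+1}^j$ is again a stored state on the $j$th trajectory, since $x_{k+1}^j = A x_k^j + B u_k^j$ by \eqref{eq:givenClosedLoop}. The only subtlety is the boundary term $k = T_j$: there $x_{k+1}^j$ is not a stored state. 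Here Assumption~\ref{ass:feasibility} rescues us, since $u_{T_j}^j = 0$ and the stored trajectory has converged to the origin, so $x_{T_j+1}^j = A x_{T_j}^j + B\cdot 0 = A x_{T_j}^j$; because $x_{T_j}^j$ is at (or arbitrarily close to) the origin, which is itself in $\mathcal{SS}\subseteq\mathcal{CS}$ and fixed under $A$ (the origin is an equilibrium), we may replace $x_{T_j+1}^j$ by $x_{T_j}^j$ (equivalently, the origin). Thus define the shifted multipliers $\tilde\lambda_{k|t+1}^{j} = \lambda_{(k-1)|t}^{j,*}$ for $k \ge 1$, and fold the mass from the terminal indices onto the origin's representation; these are nonnegative, sum to one, and reconstruct $x_{t+1}$ as a convex combination of points in $\mathcal{SS}$. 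Hence $x_{t+1}\in\mathcal{CS}$, the problem \eqref{eq:valueFuncEval} is feasible at $t+1$, and the induction closes.

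The main obstacle is the careful handling of the terminal index $k = T_j$, i.e. making rigorous the claim that the "shift by one" of a stored trajectory stays inside $\mathcal{CS}$; this is exactly where Assumption~\ref{ass:feasibility} (terminal input zero, convergence to the origin) is essential, and a fully rigorous treatment must either assume the trajectories reach the origin exactly in finite time or carry an $\varepsilon$ argument. A secondary point worth stating explicitly is that $x_{t+1}\in\mathcal{CS}$ guarantees feasibility of \eqref{eq:valueFuncEval} but one should also note the input and state constraints \eqref{eq:stateInputConstr}: since $\mathcal{X}$ and $\mathcal{U}$ are convex and every stored trajectory is feasible by Assumption~\ref{ass:feasibility}, the convex combinations $x_t\in\mathcal{X}$ and $u_t\in\mathcal{U}$ follow immediately, so recursive feasibility in the full constrained sense holds as well.
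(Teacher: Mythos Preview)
Your proposal is correct and follows essentially the same approach as the paper: induction on $t$ with invariant $x_t\in\mathcal{CS}$, the linearity computation $x_{t+1}=\sum_{j,k}\lambda_{k|t}^{j,*}x_{k+1}^j$, and a shifted multiplier vector as the explicit feasible solution at $t+1$. The paper writes the shift out as $\bar\lambda_0^j=0$, $\bar\lambda_{k}^j=\lambda_{k-1|t}^{j,*}$ for $1\le k\le T_j-1$, and $\bar\lambda_{T_j}^j=\lambda_{T_j-1|t}^{j,*}+\lambda_{T_j|t}^{j,*}$, i.e.\ exactly the ``fold the terminal mass onto $x_{T_j}^j$'' construction you describe; your observation that this step tacitly treats $x_{T_j}^j$ as the origin (rather than merely close to it) is well taken, and the paper makes the same implicit identification.
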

\begin{proof}
The proof follows from linearity of the system. \\
We assume that at time $t$ the system state $x_t \in \mathcal{CS}$, therefore the optimization problem \eqref{eq:valueFuncEval} is feasible. Let \eqref{eq:optimalSol} be the optimal solution to \eqref{eq:valueFuncEval}, then at the next time step $t+1$ we have 
\begin{equation*}
\begin{aligned}
         x_{t+1} &=  A x_t + B \sum_{j=0}^M\sum_{k=0}^{T_j} \lambda^{j,*}_{k|t} u^j_k  \\
    &  = A \sum_{j=0}^M\sum_{k=0}^{T_j} \lambda^{j,*}_{k|t} x^j_k + B \sum_{j=0}^M\sum_{k=0}^{T_j} \lambda^{j,*}_{k|t} u^j_k \\
    &  = \sum_{j=0}^M\sum_{k=0}^{T_j} \lambda^{j,*}_{k|t} (A  x^j_k + B u_k^j) \in \mathcal{CS}.
\end{aligned}
\end{equation*}
By Assumption~\ref{ass:feasibility} we have that 
\begin{equation*}
    \sum_{j=0}^M \lambda^{j,*}_{T_j|t} (A  x^j_{T_j} + B u_{T_j}^j)=0
\end{equation*} 
and therefore
\begin{equation*}
    x_{t+1} =  \sum_{j=0}^M\sum_{k=0}^{T_j} \lambda^{j,*}_{k|t} (A  x^j_k + B u_k^j) = \sum_{j=0}^M\sum_{k=0}^{T_j} \bar \lambda^{j}_k x^j_k
\end{equation*}
where $\forall j\in \{0, \ldots M\}$ 
\begin{equation}
    \begin{aligned}\label{eq:feasibleSol}
        &\bar \lambda_0^j = 0, \\
        &\bar \lambda_{k_j}^j = \lambda_{k_j-1|t}^{j,*}, \quad \quad \quad \quad \quad \forall k_j \in \{ 1, \ldots, T_j-1 \} \\
        &\bar \lambda_{T_j}^j = \lambda_{T_j-1|t}^{j,*} + \lambda_{T_j|t}^{j,*}
    \end{aligned}
\end{equation}
is a feasible solution to the optimization problem \eqref{eq:valueFuncEval} at time $t+1$.\\
By assumption we have that at time $t=0$ the state $x_0 \in \mathcal{CS}$. Furthermore, we have shown that if at time $t$ the state $x_t \in \mathcal{CS}$, then at time $t+1$ the state $x_{t+1} \in \mathcal{CS}$ and the optimization problem \eqref{eq:valueFuncEval} is feasible. Therefore by induction we conclude that $x_t \in \mathcal{CS} \subseteq \mathcal{X}, ~\forall t \in \mathbb{Z}_{0+}$ and that the optimization problem \eqref{eq:valueFuncEval} is feasible $\forall t \in \mathbb{Z}_{0+}$.
\end{proof}

The above \textit{Proposition 1} implies that the data-based policy \eqref{eq:valueFuncEval} and \eqref{eq:policy} satisfies the input constraints, and the closed-loop system \eqref{eq:System} and \eqref{eq:policy} satisfies the state constraints at all time instants, i.e. $u_t \in \mathcal{U}$ and $x_t \in \mathcal{X}, ~\forall t \in \mathbb{Z}_{0+}$.

\begin{assumption}\label{ass:cost}
The stage cost $h(\cdot, \cdot)$ is a continuous convex function and $\forall u \in \mathcal{U}$ it satisfies
\begin{equation}
\begin{aligned}
h(0,u) = 0,\textrm{ and}~ h(x,u) \succ 0 ~ \forall ~ x \in&~\rr^n \setminus \{0\}. \notag
\end{aligned}
\end{equation}
\end{assumption}

\begin{prop}
\textit{(Convergence)} Consider the closed-loop system~\eqref{eq:System} and \eqref{eq:policy}. Let Assumptions~\ref{ass:feasibility}-\ref{ass:cost} hold and $\mathcal{CS}$ be the convex safe set defined in \eqref{eq:CS}. If the initial state $x_0 \in \mathcal{CS}$. Then, the origin of the closed-loop system \eqref{eq:System} and \eqref{eq:policy} is asymptotically stable.
\end{prop}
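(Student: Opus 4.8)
The plan is to use the $Q$-function of \eqref{eq:valueFunc} as a Lyapunov function for the closed loop \eqref{eq:System}, \eqref{eq:policy} on the compact polytope $\mathcal{CS}=\mathrm{Conv}(\mathcal{SS})$. First I would record the elementary properties of $Q$. Because \eqref{eq:valueFunc} is a parametric LP, $Q$ is continuous and piecewise affine on its domain $\mathcal{CS}$ \cite{borrelli2017predictive}; since every stored trajectory terminates at the origin (Assumption~\ref{ass:feasibility}), $0\in\mathcal{SS}\subseteq\mathcal{CS}$ and, by assigning all the weight to the terminal samples, $Q(0)=0$. On the other hand $J^j_k(x^j_k)\ge h(x^j_k,u^j_k)>0$ whenever $x^j_k\neq 0$ by Assumption~\ref{ass:cost}, so in any feasible $\bm\lambda$ reproducing a point $x\neq 0$ some multiplier attached to a nonzero sample is strictly positive, giving $Q(x)>0$ for all $x\in\mathcal{CS}\setminus\{0\}$. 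Continuity and positive definiteness on the compact set $\mathcal{CS}$ then furnish class-$\mathcal{K}$ bounds $\alpha_1(\|x\|)\le Q(x)\le\alpha_2(\|x\|)$ on $\mathcal{CS}$.

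The heart of the proof is the descent inequality $Q(x_{t+1})\le Q(x_t)-h(x_t,u_t)$. I would reuse the time-shifted multiplier vector $\bar{\bm\lambda}$ from \eqref{eq:feasibleSol}, which the proof of Proposition~1 already shows is feasible for \eqref{eq:valueFuncEval} at $x_{t+1}$, so that $Q(x_{t+1})\le\sum_{j}\sum_{k}\bar\lambda^j_k J^j_k(x^j_k)$. Expanding the right-hand side, the index shift, the identity $J^j_{T_j}(x^j_{T_j})=h(0,0)=0$ (Assumptions~\ref{ass:feasibility}--\ref{ass:cost}), and the one-step recursion $J^j_k(x^j_k)=h(x^j_k,u^j_k)+J^j_{k+1}(x^j_{k+1})$ make the sum telescope to $\sum_j\sum_k\lambda^{j,*}_{k|t}J^j_k(x^j_k)-\sum_j\sum_k\lambda^{j,*}_{k|t}h(x^j_k,u^j_k)=Q(x_t)-\sum_j\sum_k\lambda^{j,*}_{k|t}h(x^j_k,u^j_k)$. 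Since the optimal multipliers satisfy $\sum_j\sum_k\lambda^{j,*}_{k|t}=1$, $\sum_j\sum_k\lambda^{j,*}_{k|t}x^j_k=x_t$ and, by \eqref{eq:policy}, $\sum_j\sum_k\lambda^{j,*}_{k|t}u^j_k=u_t$, convexity of $h$ (Assumption~\ref{ass:cost}) and Jensen's inequality give $\sum_j\sum_k\lambda^{j,*}_{k|t}h(x^j_k,u^j_k)\ge h(x_t,u_t)$, which closes the inequality.

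To finish I would invoke the standard discrete-time Lyapunov argument: by Proposition~1 we have $x_t\in\mathcal{CS}$ for all $t\ge 0$, so the descent inequality holds along the whole closed-loop trajectory; $Q(x_t)$ is then nonincreasing and bounded below, hence convergent, which forces $h(x_t,u_t)\to 0$, while the sublevel sets of $Q$ in $\mathcal{CS}$ are positively invariant, giving Lyapunov stability, and the class-$\mathcal{K}$ bounds upgrade $h(x_t,u_t)\to 0$ to $x_t\to 0$. The step that I expect to require the most care is converting the pointwise decrement $-h(x_t,u_t)$ into a bona fide negative-definite function of $x_t$, since the policy $\pi(\cdot)$ in \eqref{eq:policy} is only a (possibly discontinuous) selection of LP optimizers; I would handle this by observing that $\pi(x)=\sum_j\sum_k\lambda^{j,*}_{k|t}u^j_k$ always lies in the compact polytope $\mathrm{Conv}\{u^j_k\}\subseteq\mathcal{U}$, so that $h(x_t,u_t)\ge\min_{u\in\mathrm{Conv}\{u^j_k\}}h(x_t,u)=:\tilde h(x_t)$ with $\tilde h$ continuous and positive definite by Assumption~\ref{ass:cost}; then $Q(x_{t+1})-Q(x_t)\le-\tilde h(x_t)$ yields asymptotic stability directly, and a LaSalle-type argument on the compact invariant set $\mathcal{CS}$ would be an equally valid alternative.
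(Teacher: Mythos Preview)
Your proposal is correct and follows essentially the same route as the paper: $Q$ serves as a Lyapunov function, the time-shifted multipliers \eqref{eq:feasibleSol} furnish a feasible candidate at $x_{t+1}$, and the one-step recursion $J^j_k=h(x^j_k,u^j_k)+J^j_{k+1}$ yields the decrease $Q(x_{t+1})-Q(x_t)\le -\sum_{j,k}\lambda^{j,*}_{k|t}h(x^j_k,u^j_k)$. You go a bit further than the paper by applying Jensen's inequality (which the paper defers to the proof of Proposition~3) and by building the state-only decrement $\tilde h$, making the final asymptotic-stability conclusion more rigorous than the paper's terse appeal to positive definiteness of $h$ and continuity of $Q$.
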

\begin{proof}In the following we show that the approximated value function $Q(\cdot)$ from~\eqref{eq:valueFuncEval} is a Lyapunov function for the origin of the closed loop system \eqref{eq:System} and \eqref{eq:policy}. Continuity of $Q(\cdot)$ can be shown as in \cite[Chapter 7]{borrelli2017predictive}. Moreover from \eqref{eq:RelalizedCost} and Assumption~2 we have that $Q(x) \succ 0 ~ \forall ~ x \in \mathcal{CS} \setminus \{0\}$ and $Q(0)=0$. Thus, we need to show that  $Q(\cdot)$ is decreasing along the closed loop trajectory.\\
By feasibility of Problem~\eqref{eq:valueFuncEval} from Theorem~1, we have that at time $t$
\begin{equation}\label{eq:lyapunovPart1}
    \begin{aligned}
        Q(x_t) &= \sum_{j=0}^M \sum_{k=0}^{T_j} \lambda_{k|t}^{j,*} {J}_k^j(x_k^j) = \sum_{j=0}^M \sum_{k=0}^{T_j} \lambda_{k|t}^{j,*} \sum_{i=k}^{T_j} h(x_i^j, u_i^j) \\
        & = \sum_{j=0}^M \sum_{k=0}^{T_j} \lambda_{k|t}^{j,*} h(x_k^j, u_k^j) + \sum_{j=0}^M \sum_{k=0}^{T_j-1} \lambda_{k|t}^{j,*} {J}_{k+1}^j(x_{k+1}^j).
    \end{aligned}
\end{equation}
We notice that the summation of the cost-to-go in the above expression can be rewritten as
\begin{equation}\label{eq:lyapunovPart2}
    \sum_{j=0}^M \sum_{k=0}^{T_j-1} \lambda_{k|t}^{j,*} {J}_{k+1}^j(x_{k+1}^j) = \sum_{j=0}^M \sum_{k=0}^{T_j} \bar \lambda_{k|t}^{j} {J}_k^j(x_k^j) \geq Q(x_{t+1}),
\end{equation}
where $\bar \lambda_{k|t}^{j}$ is the candidate solution defined in \eqref{eq:feasibleSol}.

Finally, from equations \eqref{eq:lyapunovPart1} and \eqref{eq:lyapunovPart2} we conclude that the optimal cost is a decreasing Lyapunov function along the closed loop trajectory,
\begin{equation}\label{eq:LyapProof2}
	\begin{aligned}
		Q(x_{t+1})-Q(x_{t}) \leq - & \sum_{j=0}^M \sum_{k=0}^{T_j} \lambda_{k|t}^{j,*} h(x_k^j, u_k^j) < 0, \\
		&~~~~~~~~~~~~\forall~x_t \in R^n \setminus \{0\}.
	\end{aligned}
\end{equation}
Equation (\ref{eq:LyapProof2}), the positive definitiveness of $h(\cdot, \cdot)$ and the continuity of $Q(\cdot)$ imply that the origin of the closed-loop system~\eqref{eq:System} and \eqref{eq:policy} is asymptotically stable. 
\end{proof}

\begin{prop}
\textit{(Cost)} Consider the closed-loop system~\eqref{eq:System} and \eqref{eq:policy}. Let Assumptions~\ref{ass:feasibility}-\ref{ass:cost} hold and $\mathcal{CS}$ be the convex safe set defined in \eqref{eq:CS}. If the initial state $x_0 \in \mathcal{CS}$. Then, the $Q$-function at $x_0$, $Q(x_0)$, upper bounds the cost associated with the trajectory of closed-loop system~\eqref{eq:System} and \eqref{eq:policy},
\begin{equation}\label{eq:closedLoopCost}
    J\big(x_0\big) =   \sum_{k=0}^{\infty} h(x_k, u_k) \leq Q\big(x_0\big)
\end{equation}
where $\{x_0 , \ldots, x_{t}, \ldots\}$ and $ \{u_0 , \ldots, u_{t}, \ldots\}$ are the closed-loop trajectory and associated input sequence, respectively. 

\end{prop}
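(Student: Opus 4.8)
The plan is to telescope the per-step decrease inequality \eqref{eq:LyapProof2} established in the Convergence proposition. The key observation is that the stage cost appearing on the right-hand side of \eqref{eq:LyapProof2} is precisely the actual stage cost incurred by the closed-loop system at time $t$. Indeed, by the same feasibility argument used in Proposition~1, the optimal multipliers $\bm\lambda_t^*$ satisfy $\sum_{j,k}\lambda_{k|t}^{j,*} x_k^j = x_t$ and $u_t = \sum_{j,k}\lambda_{k|t}^{j,*} u_k^j$, so by convexity of $h(\cdot,\cdot)$ from Assumption~\ref{ass:cost} we obtain
\begin{equation*}
    h(x_t, u_t) = h\Big( \sum_{j,k}\lambda_{k|t}^{j,*} x_k^j, \sum_{j,k}\lambda_{k|t}^{j,*} u_k^j \Big) \leq \sum_{j=0}^M \sum_{k=0}^{T_j} \lambda_{k|t}^{j,*} h(x_k^j, u_k^j).
\end{equation*}
Combining this with \eqref{eq:LyapProof2} yields $Q(x_{t+1}) - Q(x_t) \leq -h(x_t,u_t)$ for every $t \geq 0$.

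Next I would sum this inequality over $t = 0, \ldots, T-1$. The left-hand side telescopes to $Q(x_T) - Q(x_0)$, giving $\sum_{t=0}^{T-1} h(x_t, u_t) \leq Q(x_0) - Q(x_T) \leq Q(x_0)$, where the last step uses $Q(x_T) \geq 0$ (which follows from positive semidefiniteness of $Q$, already noted in the Convergence proof). Finally, I would let $T \to \infty$: by the Convergence proposition the closed-loop state converges to the origin, and by continuity and positive definiteness of $h$ the tail sum is well-defined, so $\sum_{t=0}^\infty h(x_t,u_t) = \lim_{T\to\infty}\sum_{t=0}^{T-1} h(x_t,u_t) \leq Q(x_0)$, which is exactly \eqref{eq:closedLoopCost}.

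The main obstacle, though a mild one, is justifying the passage to the infinite-horizon limit: one must argue that the partial sums $\sum_{t=0}^{T-1} h(x_t,u_t)$ form a monotone nondecreasing sequence (true since $h \geq 0$) that is bounded above by $Q(x_0)$, hence convergent, so the infinite sum is finite and the bound is preserved in the limit. This is where Assumption~\ref{ass:cost} (nonnegativity of the stage cost) is used in an essential way. Everything else is a direct consequence of results already proved: the feasible candidate \eqref{eq:feasibleSol}, the decrease inequality \eqref{eq:LyapProof2}, and convexity of $h$.
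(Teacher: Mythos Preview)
Your proposal is correct and follows essentially the same route as the paper: combine \eqref{eq:LyapProof2} with convexity of $h(\cdot,\cdot)$ to obtain $Q(x_t)\geq h(x_t,u_t)+Q(x_{t+1})$, then telescope and pass to the limit. The only cosmetic difference is that the paper invokes $\lim_{k\to\infty}Q(x_k)=0$ (via continuity of $Q$ and $x_k\to 0$) rather than your monotone--bounded argument using $Q\geq 0$; both are valid and equally short.
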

\begin{proof} From \eqref{eq:LyapProof2} and convexity of $h(\cdot, \cdot)$, we have that 
\begin{equation}
    Q(x_t) \geq h(x_t, u_t) + Q(x_{t+1}) \notag
\end{equation}
Using the above equation recursively and from the asymptotic convergence to the origin we have that
\begin{equation}
\begin{aligned}
    Q(x_0) & \geq h(x_0, u_0) + Q(x_{1})   \\
   & \geq \sum_{k=0}^{\infty} h(x_k, u_k) + \lim_{k \rightarrow \infty} Q(x_{k})= \sum_{k=0}^{\infty} h(x_k, u_k). \notag
\end{aligned}
\end{equation}
\end{proof}

Note that, if the optimal closed-loop trajectory from $x_0=x_s$ is given, then the approximated value function $Q(x_s)$ will be the optimal cost-to-go from $x_s$. Consequently, \textit{Proposition~3} implies that the proposed data-based policy will behave optimally for $x_0=x_s$, if the optimal behavior from $x_0=x_s$ has been observed. 


\section{Examples}
In this section we first test the data-based policy \eqref{eq:valueFuncEval} and \eqref{eq:policy} on a double integrator system. Afterwards, we test the local data-based policy \eqref{eq:localValueFunc} and \eqref{eq:localPolicy} on the Berkeley Autonomous Racing Car (BARC) platform.  

\subsection{Example I: Double Integrator}

Consider the following discrete time Constrained Linear Quadratic Regulator (CLQR) problem
\begin{equation}\label{eq:CLQR}
\begin{aligned}
J^*\big(x_0\big) =\min_{\bar u_0, \bar  u_1,\ldots} & \quad \sum\limits_{k=0}^{\infty} \Big[ ||\bar x_k||_2^2 + ||\bar u_k||_2^2 \Big] \\
\textrm{s.t.}&\\
   &\quad \bar x_{k+1}= \begin{bmatrix} 1 & 1 \\ 0 & 1 \end{bmatrix} \bar x_k +  \begin{bmatrix} 0 \\ 1 \end{bmatrix} \bar u_k,~\forall k\geq 0 \\
   &\quad  \begin{bmatrix} -10 \\ -10 \end{bmatrix} \leq \bar x_k \leq \begin{bmatrix} 10 \\ 10 \end{bmatrix} ~ \forall k\geq 0 \\
   &\quad -1 \leq \bar u_k \leq 1 ~~\forall k\geq 0, \\
  &\quad \bar x_0=x_0=[-1, 3]^\top.
\end{aligned}
\end{equation}
First, we construct the convex safe set using one solution to the above CLQR and we empirically validate \textit{Proposition}~1-3. Afterwards, we analyze the effect of the amount of data on the value function approximation and the data-based policy \eqref{eq:valueFuncEval} and \eqref{eq:policy}.

\subsubsection{Properties verification}
First, we compute and store the optimal solution to the CLQR problem \eqref{eq:CLQR},
\begin{equation}\label{eq:closedLoopLQR}
    \begin{aligned}
             [\bar x_0^*, \bar x_1^*, \ldots, \bar x_T^*] \\
             [\bar u_0^*, \bar u_1^*, \ldots, \bar u_T^*]
    \end{aligned}
\end{equation}
where $T$ is the time index at which $||\bar x_T^*||_2^2 \leq \epsilon = 10^{-10}$.

\begin{figure}[h!]
    \centering
	\includegraphics[width= \columnwidth]{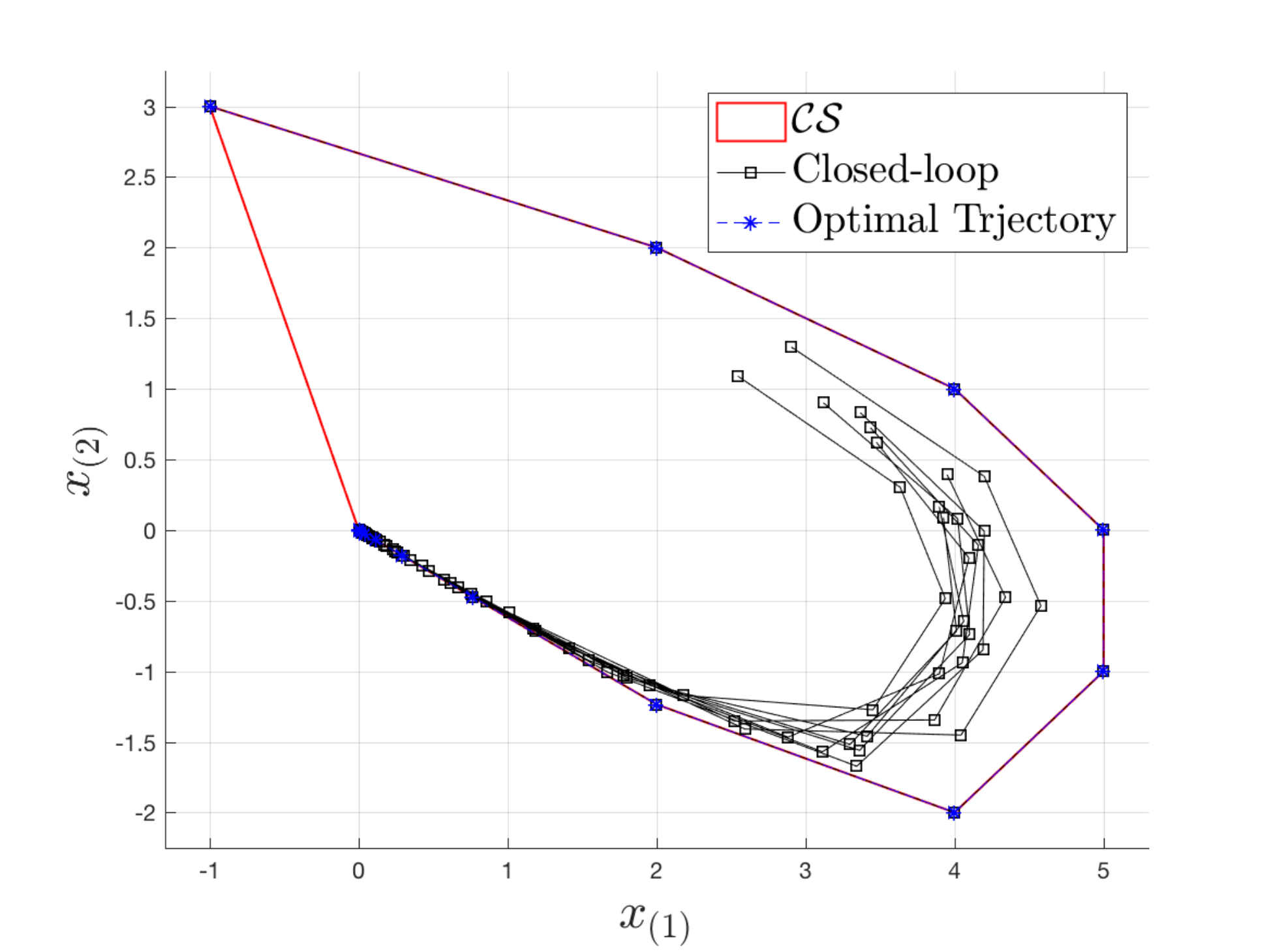}.
    \caption{Closed-loop trajectories performed by the data-based policy.}
    \label{fig:closedLoopLQR}
\end{figure}

The stored optimal trajectory in \eqref{eq:closedLoopLQR} is used to build the convex safe set $\mathcal{CS}$ in \eqref{eq:CS} and the approximation to the value function $Q(\cdot)$ in \eqref{eq:valueFunc}. We tested the data-based policy for $x_0 = \bar x_0^*$ and for other $10$ randomly picked initial conditions inside $\mathcal{CS}$. We denote the resulting closed-loop trajectories and associated input sequences for $j \in \{0, \ldots, 9 \}$ as
\begin{equation}\label{eq:givenStateAndInputResults}
\begin{aligned}
    {\bf{x}}^j &= [x_0^j , \ldots, x_{T_j}^j]\\
    {\bf{u}}^j &= [u_0^j , \ldots, u_{T_j}^j] \\
\end{aligned}.
\end{equation}
Figure~\ref{fig:closedLoopLQR} shows the closed-loop trajectories, we confirm that state and input constraints are satisfies, accordingly to \textit{Proposition~1}. Furthermore, we notice that the closed-loop trajectories converge to the origin as we expected from \textit{Proposition~2}. It is interesting to notice that for $x_0 = \bar x_0^*$ the closed-loop trajectory performed by the data-based policy overlaps with the optimal one.

Moreover, we analyze the cost associated with the closed-loop trajectories~\eqref{eq:closedLoopCost}. Table~\ref{table:comparisonLQR} shows the realized cost~\eqref{eq:closedLoopCost} and the approximated value function $Q(\cdot)$ evaluated at different initial conditions. We confirm that $Q\big(x_0\big)$ upper bounds the performance of the closed-loop trajectory, as shown in \textit{Proposition 3}.
   
\begin{table}[h!]
\centering\caption{Comparison of the realized cost and value function for different initial conditions}\label{table:comparisonLQR}
\begin{tabular}{l|l|l}\toprule
 $~~~~~~~~~~x_0$ & $J\big(x_0\big)$  & $Q\big(x_0\big)$ \\ \midrule
 $[-1, 3]^\top$&  $112.53$ & $112.53$   \\
 $[2.9033,   1.2959]^\top$ & $78.60$ & $89.60$   \\
 $[3.9495,   0.3921]^\top$&  $62.00$ & $73.97$   \\
 $[3.3673,   0.8315]^\top$&  $66.45$ & $79.23$   \\
 $[3.4349,   0.7243]^\top$&  $62.96$ & $76.79$   \\
 $[3.9253,   0.0874]^\top$&  $50.37$ & $63.69$   \\
 $[3.1189,   0.9013]^\top$&  $63.11$ & $78.18$   \\
 $[3.8963,   0.1645]^\top$&  $52.12$ & $65.74$  \\
 $[2.5449,   1.0898]^\top$&  $58.04$ & $76.85$   \\
 $[3.4751,   0.6212]^\top$&  $59.22$ & $74.06$   \\
 $[2.5770,   1.1763]^\top$&  $63.34$ & $80.50$  \\ \bottomrule
\end{tabular}
\end{table}

\subsubsection{The effect of data} Finally, we empirically analyze the effect of data on the $Q$-function and the data-based policy. First, we construct two approximations to the value function: $Q^{1}(\cdot)$ using \eqref{eq:closedLoopLQR} and the $10$ stored state and input trajectories computed in the previous subsection \eqref{eq:givenStateAndInputResults}, and $Q^{2}(\cdot)$ using \eqref{eq:closedLoopLQR} and the optimal solution to the CLQR for $\bar x_0 = [2.9033,   1.2959]$. Afterwards, we run the data-based policy using $Q^1(\cdot)$ and $Q^2(\cdot)$. Table~II shows the cost associated with the closed-loop trajectories $J^i(\cdot)$ and the value function approximation $Q^i(\cdot)$, for $i = \{1,2\}$. We notice that $Q^1(x_0)$ lower bounds $Q(x_0)$ from Table~I and, therefore, better approximates the value function. However, the realized cost $J^1(x_0)$ does not improve with respect to $J(x_0)$ from Table~I. On the other hand, we notice that the data-based policy constructed using $Q^2(\cdot)$ is able to improve the closed-loop performance $J^2(x_0)$. It is interesting to notice that $Q^1(x_0)$ is constructed using one optimal trajectory and $10$ feasible trajectories, whereas $Q^2(x_0)$ is constructed using just two optimal trajectories. This result is interesting and it suggests that not all data points are equally valuable. 
   
\begin{table}[h!]
\centering\caption{Comparison of the realized cost and value function for different initial conditions}\label{table:comparisonLQR}
\begin{tabular}{l | ll | ll}
 \toprule
 $~~~~~~~~~~x_0$ & $J^{1}\big(x_0\big)$  & $Q^{1}\big(x_0\big)$ & $J^{2}\big(x_0\big)$  & $Q^{2}\big(x_0\big)$ \\ \midrule
 $[-1, 3]^\top$           &  $112.53$ & $112.53$ & $112.53$ & $112.53$  \\
 $[2.9033,   1.2959]^\top$&  $78.60$  & $78.60$  & $72.89$  & $72.89$  \\
 $[3.9495,   0.3921]^\top$&  $62.00$  & $62.00$  & $59.43$  & $62.12$  \\
 $[3.3673,   0.8315]^\top$&  $66.45$  & $66.45$  & $61.86$  & $66.39$  \\
 $[3.4349,   0.7243]^\top$&  $62.96$  & $62.96$  & $58.97$  & $64.38$  \\
 $[3.9253,   0.0874]^\top$&  $50.37$  & $50.37$  & $49.24$  & $54.57$  \\
 $[3.1189,   0.9013]^\top$&  $63.11$  & $63.11$  & $58.76$  & $65.04$  \\
 $[3.8963,   0.1645]^\top$&  $52.12$  & $52.12$  & $50.73$  & $55.86$ \\
 $[2.5449,   1.0898]^\top$&  $58.04$  & $58.04$  & $53.85$  & $62.65$  \\
 $[3.4751,   0.6212]^\top$&  $59.22$  & $59.22$  & $55.81$  & $62.12$  \\
 $[2.5770,   1.1763]^\top$&  $63.34$  & $63.34$  & $58.63$  & $65.74$ \\ \bottomrule
\end{tabular}
\end{table}

\subsection{Example II: Autonomous Racing}\label{sec:expResults}
In this Section we test the proposed control strategy on a 1/10-scale open source vehicle platform called the  Berkeley Autonomous Race Car (BARC)\footnote{More information at the project site \href{http://www.barc-project.com/}{barc-project.com}}. The BARC is equipped with an inertial measurement unit, encoders, and an ultrasound-based indoor GPS system. The vehicle has an Odroid XU4 which is used for collecting data and running the state estimator. Finally, the computation are performed on a MSI laptop with an intel CORE i7. A video of the experiments can be found here: {\footnotesize{ \url{https://youtu.be/pB2pTedXLpI}}}.

The control task is to drive the vehicle continuously around the track minimizing the lap time, while being within the track boundaries. The state vector is 
\begin{equation}
    x = [v_{x}, v_y, w_z, e_{\psi}, s, e_y]^\top \notag
\end{equation}
where $v_{x}, v_y$ and $w_z$ represent the vehicle's longitudinal, lateral and angular velocity in the body fixed frame. The position of the system is measured with respect to the curvilinear reference frame \cite{micaelli}, where $s$ represents the progress of the vehicle along the centerline of the track, $e_{\psi}$ and  $e_y$ represent the  heading  angle  and  lateral  distance error  between  the  vehicle  and  the  path. It is important to underline that, given the lane boundaries $e_{y_{min}}$ and $e_{y_{max}}$, the feasible region 
$\mathcal{X} = \{ x \in \mathbb{R}^n : e_{y_{min}} \leq e_6^\top x \leq e_{y_{max}}  \}$ for $e_6=[0,0,0,0,0,1]^\top$ is a convex set. The control input vector is $u=[\delta, a]$ where $\delta$ and $a$ are the steering angle and acceleration, respectively. The input constraints are 
\begin{equation}
\begin{aligned}
    -0.25 [\text{rad}] \leq &\delta \leq 0.25 [\text{rad}]\\
    -0.7 [\text{m/s}^2] \leq&  a \leq 2 [\text{m/s}^2]. \notag
\end{aligned}
\end{equation}
Finally, we underline that the autonomous racing problem is a repetitive task and the goal is not to steer the system to the origin. Therefore, we use the method from~\cite{cdcLMPC} to apply the proposed strategy to the autonomous racing repetitive control problem. In particular, we define the set of state beyond the finish line of the track of length $L$, $\mathcal{X}_F = \{ x \in \mathbb{R}^6 : e_5^\top x \geq L \}$ and we use the set $\mathcal{X}_F$ to compute the cost associated with the stored trajectories
\begin{equation*}
    h(x,u) = \begin{cases} 1 & \mbox{If } x \notin \mathcal{X}_F \\
    0 & \mbox{If } x\in \mathcal{X}_F \end{cases}.
\end{equation*}

\begin{figure}[h!]
    \centering
	\includegraphics[width= \columnwidth]{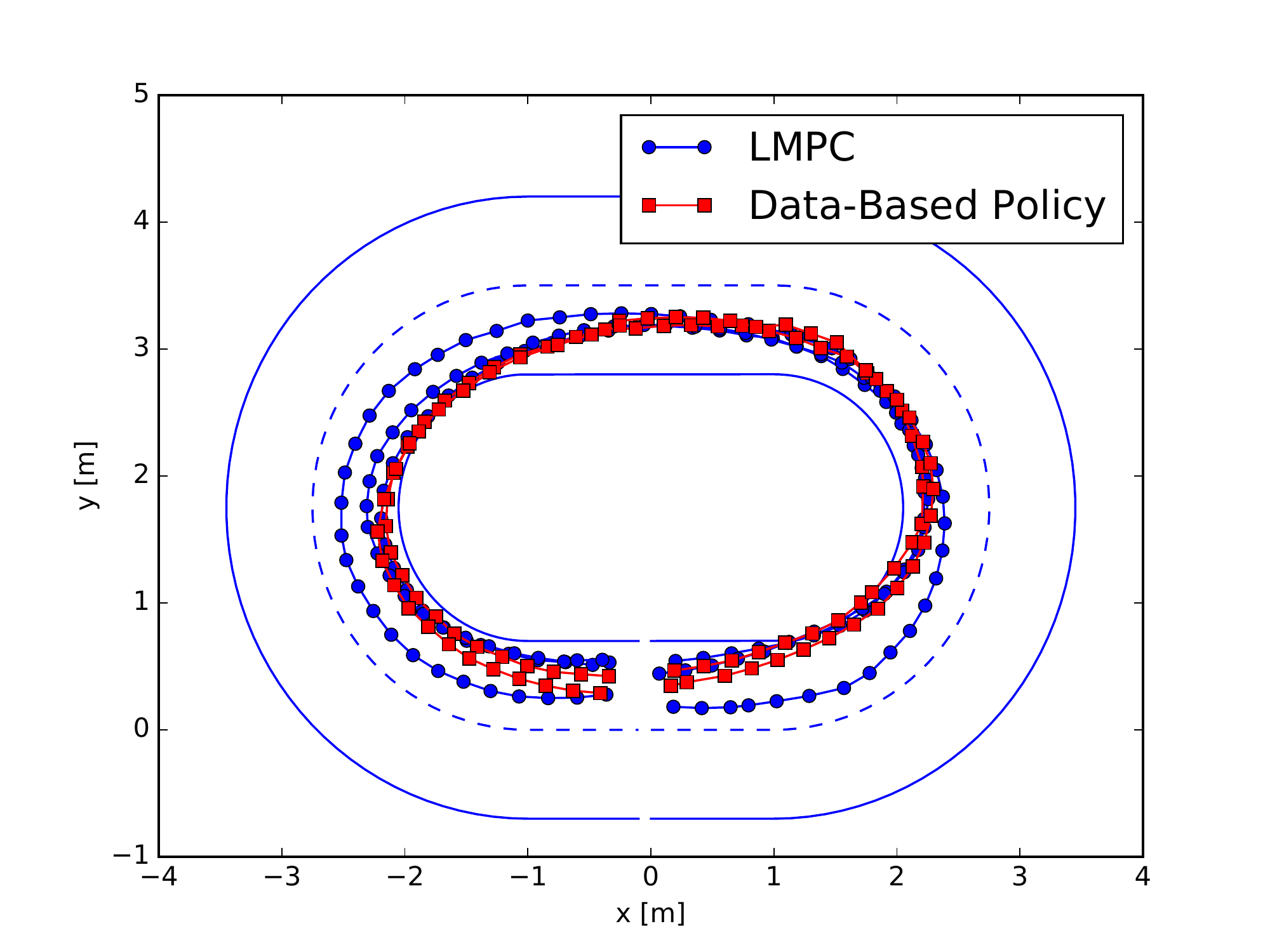}.
    \caption{In red squares are shown the closed-loop trajectories performed by the data-based policy on the oval-shaped track. In blue circles are reported three trajectories in the sampled safe set. Finally, the green dashed line marks the centerline of the track.}
    \label{fig:closedLoopOval}
\end{figure}

For the first $29$ laps of the experiment, we run the Learning Model Predictive Controller (LMPC) from \cite{cdcLMPC} to learn a fast trajectory which drives the vehicle around the track. From the $30$th lap, we run the local data-based policy \eqref{eq:localValueFunc} and \eqref{eq:localPolicy} using the latest $M = 8$ laps and $N = 10$ stored data for each lap. Therefore, the control action is computed upon solving the small optimization problem \eqref{eq:localValueFunc} where $[\lambda_{0|t}^0, \ldots,\lambda_{k|t}^j,\ldots, \lambda_{T_M|t}^M] \in \mathbb{R}^{M|\mathcal{K}^{j}(x)|}$ with $M|\mathcal{K}^{j}(x)| = 80$.

\begin{figure}[h!]
    \centering
	\includegraphics[width= \columnwidth]{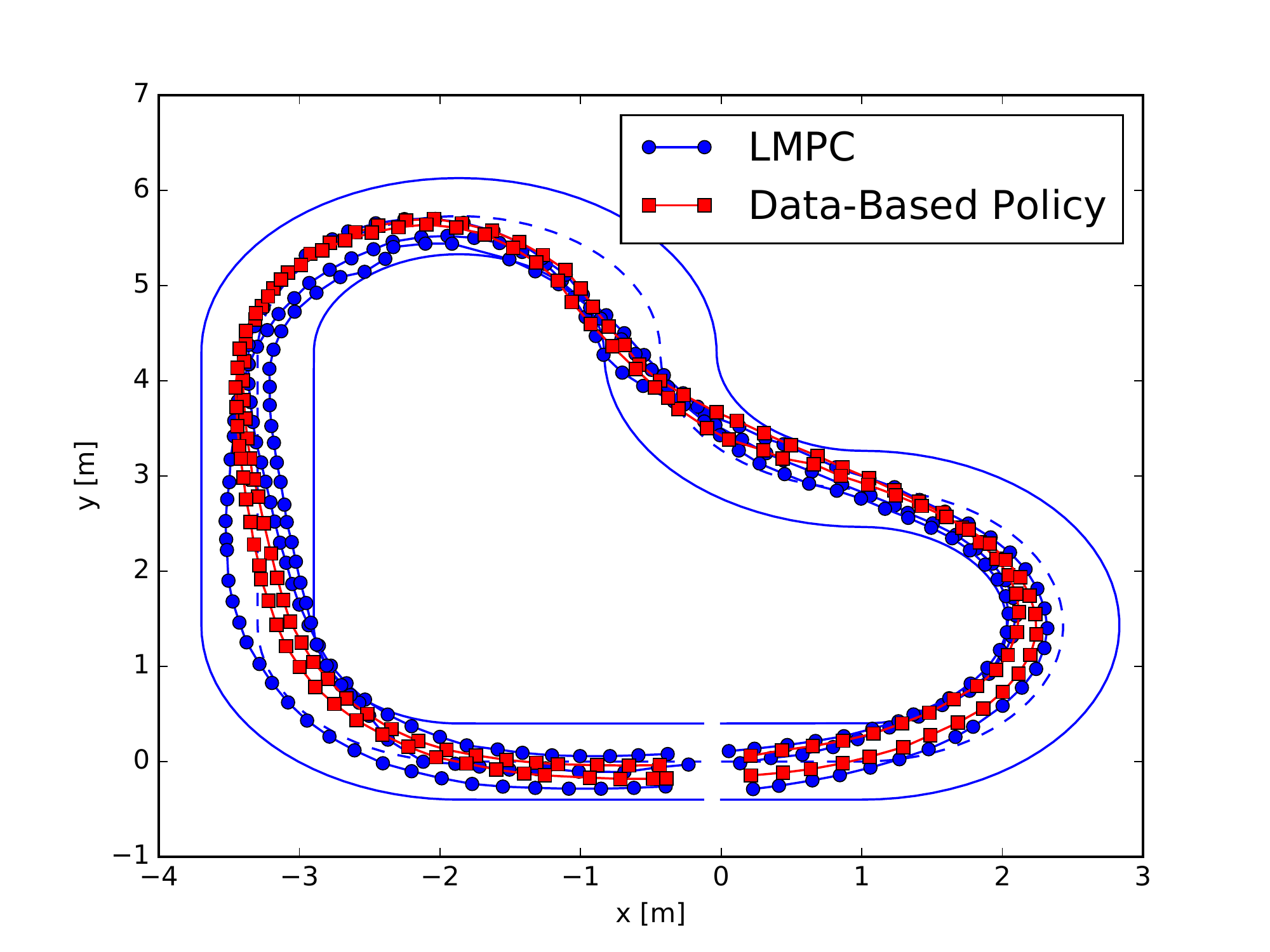}
    \caption{In red squares are shown the closed-loop trajectories performed by the data-based policy on the L-shaped track. In blue circles are reported three trajectories in the sampled safe set. Finally, the green dashed line marks the centerline of the track.}
    \label{fig:closedLoopLshape}
\end{figure}

\begin{figure}[h!]
    \centering
	\includegraphics[width= \columnwidth]{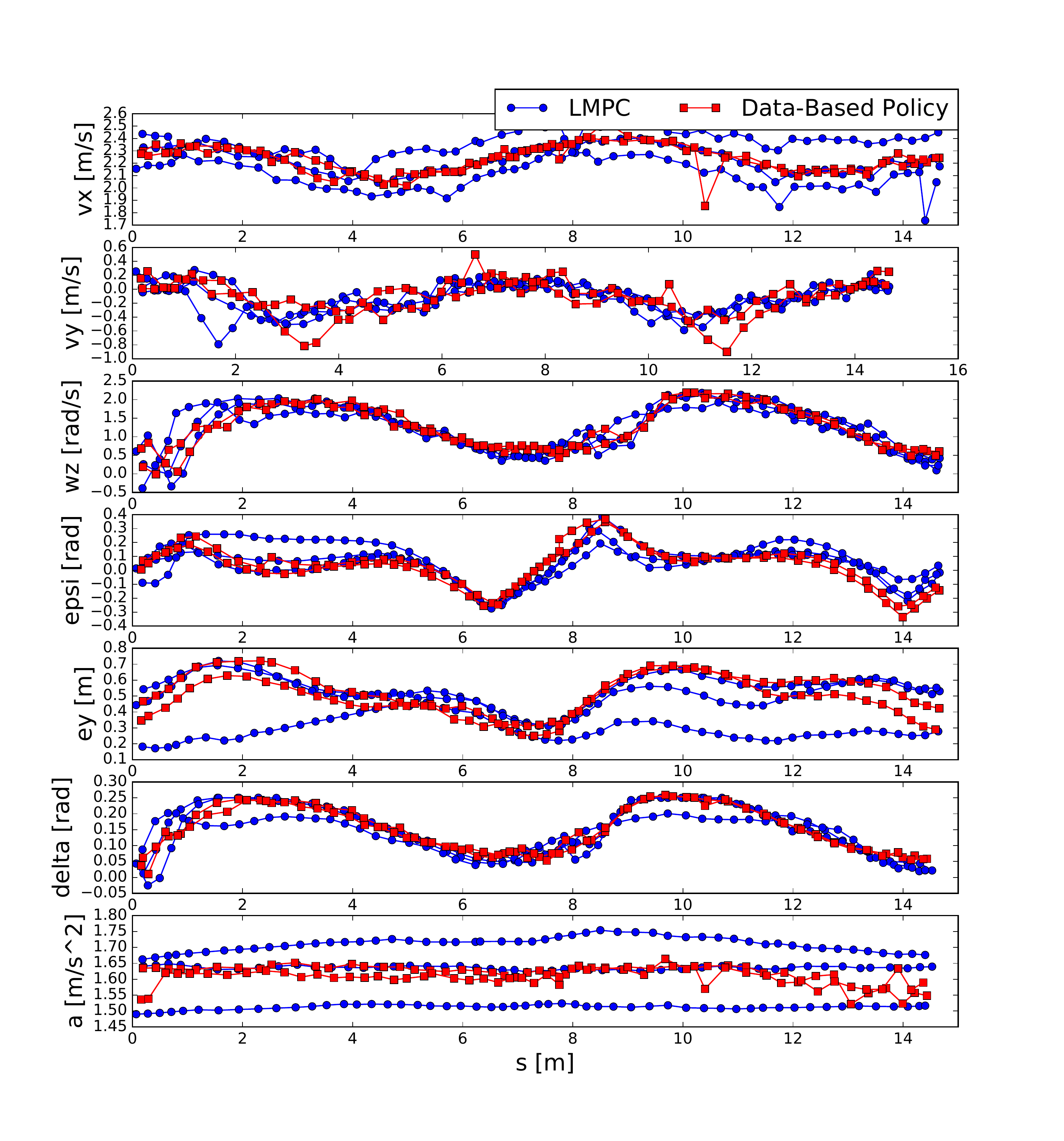}.
    \caption{Closed-loop trajectory and associated inputs of the data-Based policy and LMPC on the oval-shaped track.}
    \label{fig:stateInputOval}
\end{figure}

We tested the controller on an oval-shaped and L-shaped tracks. Figures~\ref{fig:closedLoopOval}-\ref{fig:stateInputLshape} show that the local data-based policy \eqref{eq:localValueFunc} and \eqref{eq:localPolicy} is able to drive the vehicle around the track satisfying input and state constraints. Furthermore, we notice that the closed-loop trajectories generated with the local data-based policy lies in the convex hull of the sampled safe set $\mathcal{SS}$, which is constructed from the last $8$ trajectories performed by the LMPC. It is interesting to notice that the real system is nonlinear but smooth and, for this reason, the system dynamics can be locally linearized. Intuitively, the existence of a local linear model allows us to use the local data-based policy to safely drive the vehicle. Indeed at each time $t$ the controller uses only the historical data close to the system's state $x_t$.

\begin{figure}[h!]
    \centering
	\includegraphics[width= \columnwidth]{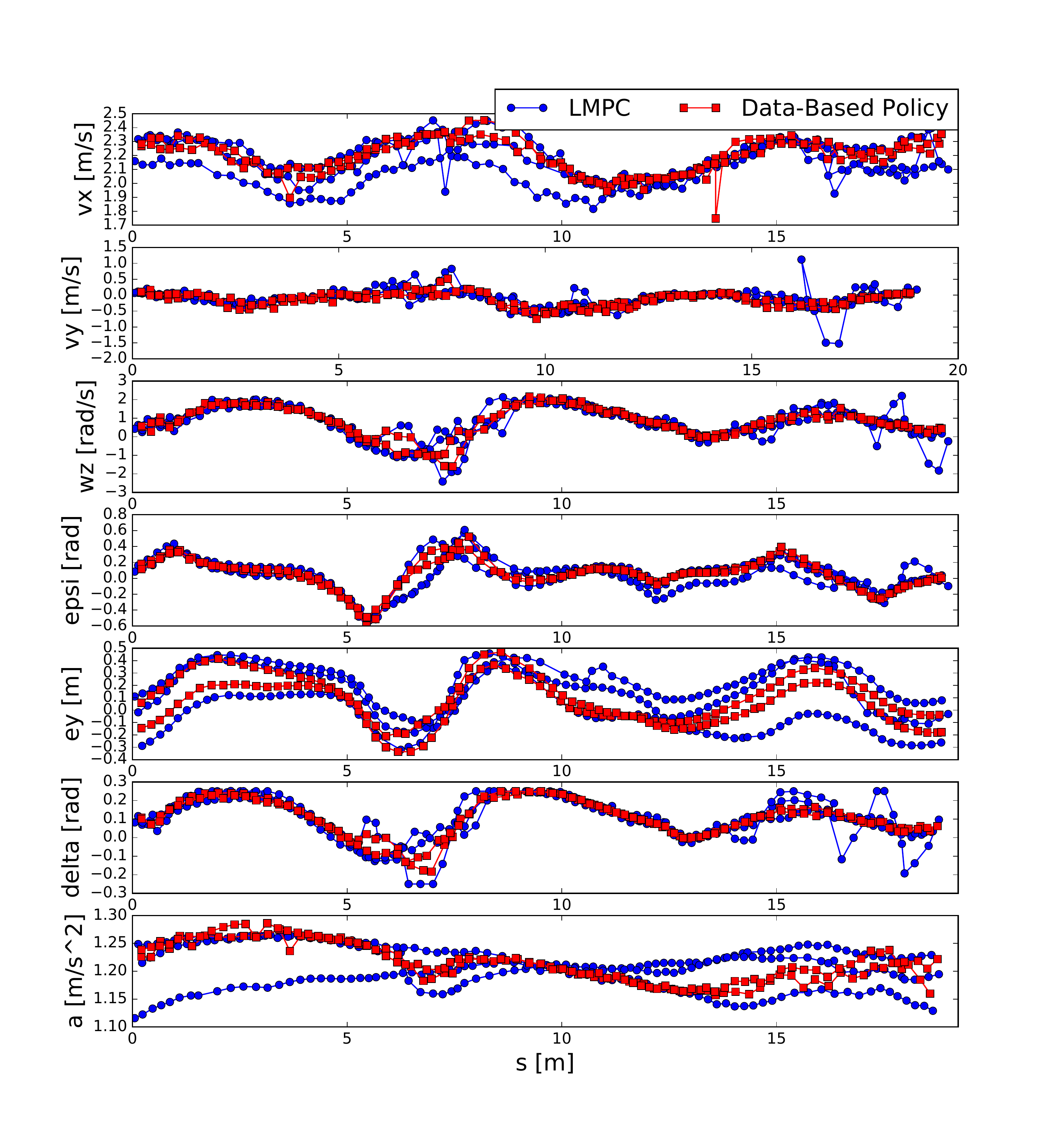}
    \caption{Closed-loop trajectory and associated inputs of the data-Based policy and LMPC on the L-shaped track.}
    \label{fig:stateInputLshape}
\end{figure}

Figures~\ref{fig:lapTimeOval}-\ref{fig:lapTimeLshape} report the lap time over the lap number. We notice that the data-based policy is able to safely drive the vehicle around the track, without hurting the closed-loop performance. In particular, the data-based policy is able to replicate the best lap times performed by the LMPC controller on both tracks. 

Finally, we analyze the computational time. We compare the computational cost associated with the proposed data-based policy and with the LMPC. Table~\ref{table:computationalTime} shows that on average it took $\sim1.3$ms to evaluate the proposed data-based policy and $\sim29.5$ms to evaluate the LMPC policy.

\begin{table}[h!]
\centering\caption{Comparison of computational time}\label{table:computationalTime}
\begin{tabular}{lrrrr}
 \toprule
 $ ~$                           & \text{Avarage}  & \text{Min} & \text{Max}  & Std Deviation\\ \midrule
 $\text{LMPC}$           & $29.5$ms  & $21.8$ms & $50.0$ms & $6.1$ms  \\
 $\text{Data-Based Policy}$     &  $1.3$ms & $1.1$ms & $2.3$ms & $0.2$ms  \\ \bottomrule
\end{tabular}
\end{table}

\begin{figure}[h!]
    \centering
	\includegraphics[width= \columnwidth]{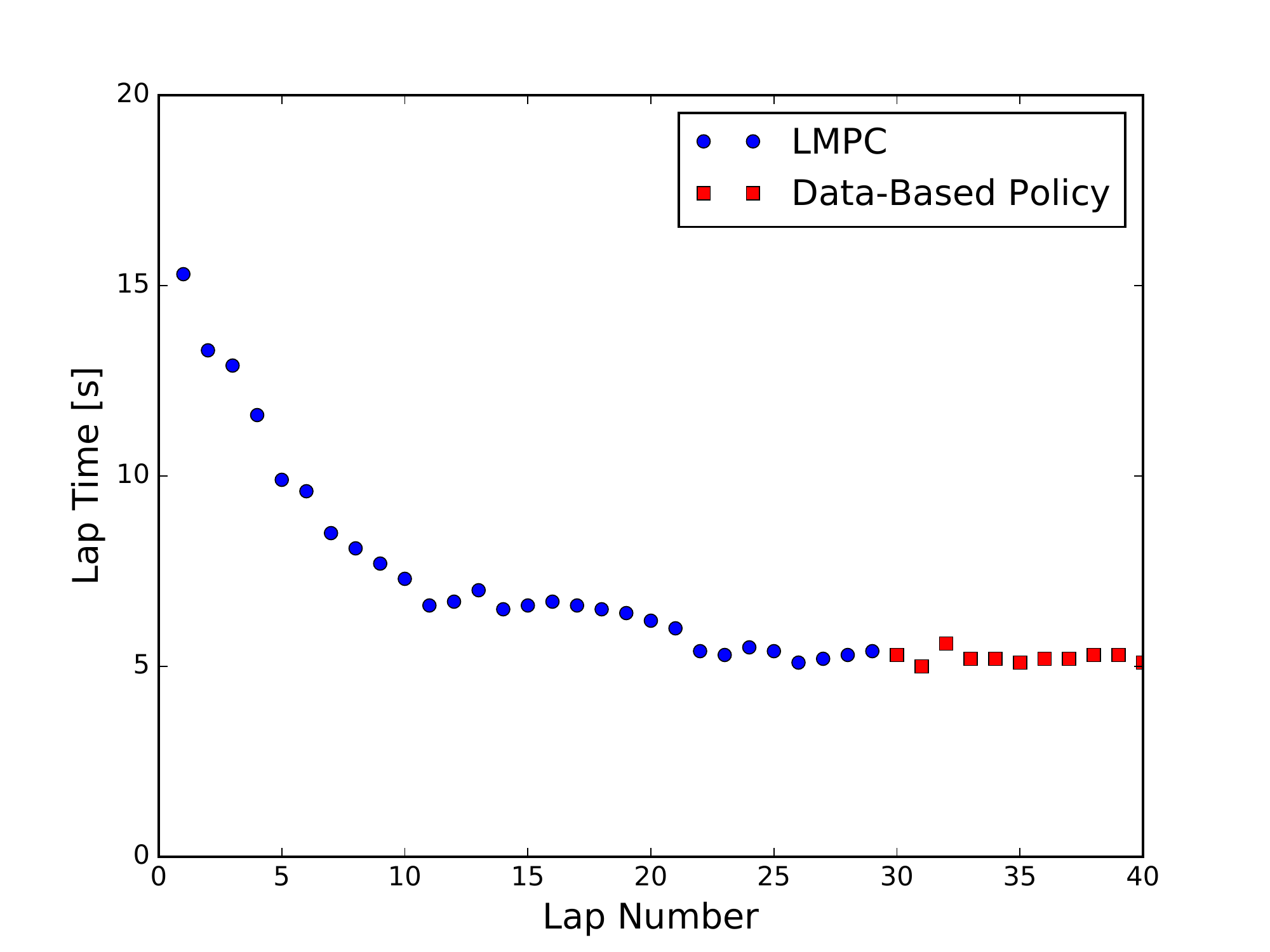}
\caption{Lap time on oval-shaped track over the lap number. At the $30$th lap the data-based policy drives the vehicle around the track without degrading the closed loop-performance.}\label{fig:lapTimeOval}
\end{figure}

\begin{figure}[h!]
    \centering
	\includegraphics[width= \columnwidth]{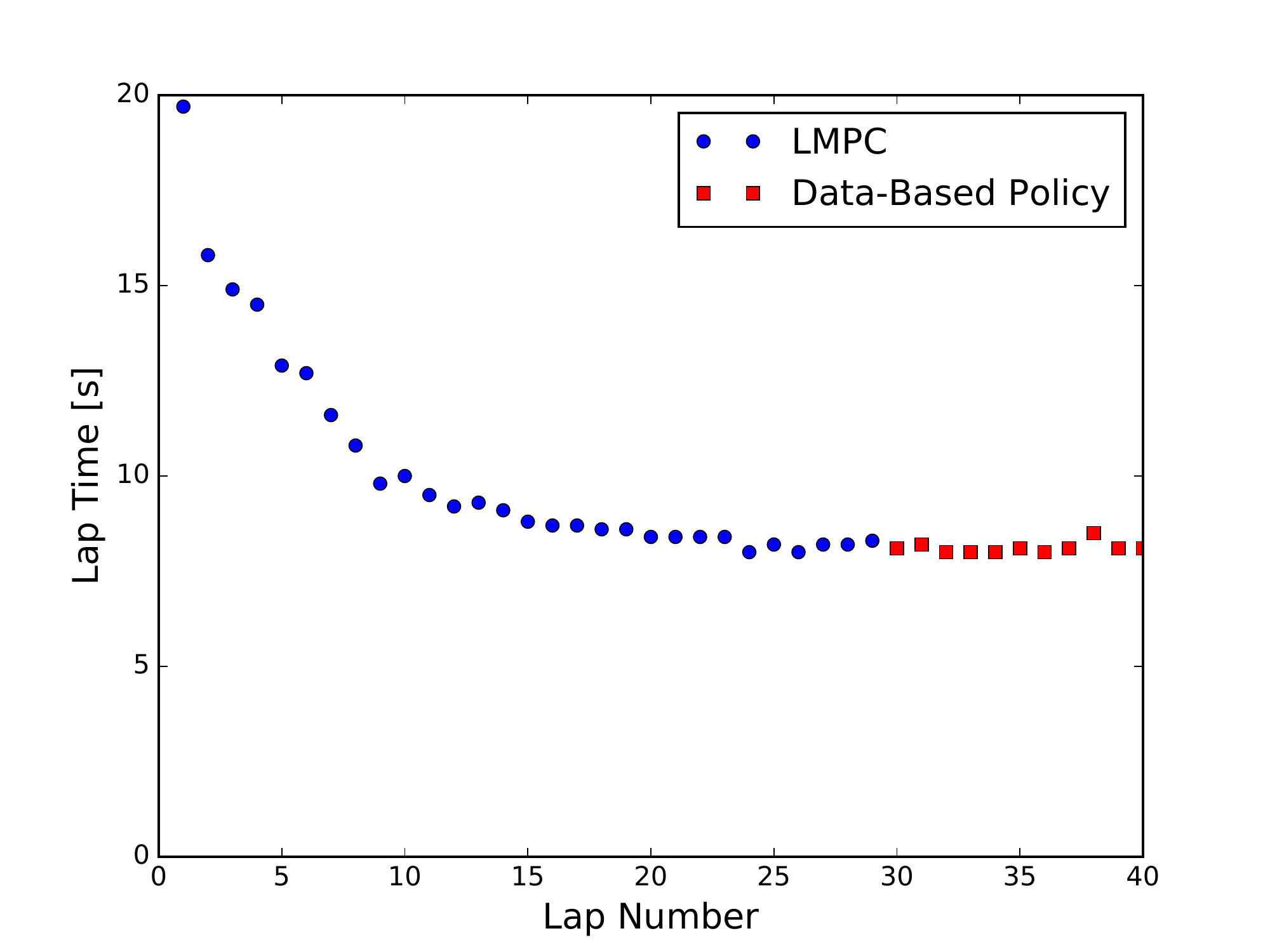}
\caption{Lap time on L-shaped track over the lap number. At the $30$th lap the data-based policy drives the vehicle around the track without degrading the closed loop-performance.}\label{fig:lapTimeLshape}
\end{figure}


\section{Conclusions}\label{sec:conclusions}
In this work we have proposed a simple strategy to construct a data-based policy. Firstly, we used historical data to construct a global and local $Q$-function, which approximates the value function. Afterwards,  we presented the data-based policy evaluates the $Q$-function and computes the control action from the stored input sequences.  We showed that the proposed strategies guarantees safety, stability and performance bounds. Finally, we tested the proposed data-based policy on an autonomous racing example. We show that the proposed strategy matches the performance of our ILC controller, while being $30$x faster at computing the control input. 

\section{Acknowledgment}
Some of the research described in this review was funded by the Hyundai Center of Excellence
at the University of California, Berkeley. This work was also sponsored by the Office of Naval
Research. The views and conclusions contained herein are those of the authors and should not be
interpreted as necessarily representing the official policies or endorsements, either expressed or
implied, of the Office of Naval Research or the US government.

\section{Appendix}
In this Appendix, we show that the proposed data-based policy may be used to steer a linear time invariant system to a terminal invariant set $\mathcal{X}_F$. In order to prove that the properties from Propositions~1-3 hold also in this settings the following assumptions must hold.

\begin{assumption}\label{assAPP:vertices}
The terminal set $\mathcal{X}_F$ is defined by the convex hull of the terminal state of the stored trajectories~\eqref{eq:givenClosedLoop}, i.e.
$\mathcal{X}_F = \text{Conv}\big( \cup_{j=0}^M x_{T_j}^j \big)$.
\end{assumption}

\begin{assumption}\label{assAPP:feasibility}
All $M+1$ input and state sequences in \eqref{eq:givenIinputs}-\eqref{eq:givenClosedLoop} are feasible and known. Furthermore, assume that the state sequence in~\eqref{eq:givenClosedLoop} converges to the terminal set~$\mathcal{X}_F$ and the terminal input $u_{T_j}^j$ keeps the evolution of the system~\eqref{eq:System} into $\mathcal{X}_F$. More formally, we assume that $x_{T_j}^j \in \mathcal{X}_F, \forall j \in \{0,\ldots,M\}$ and $ A x_{T^j} + B u_{T^j} \in \mathcal{X}_F$.
\end{assumption}

\begin{prop}
\textit{(Feasibility)} Consider the closed-loop system~\eqref{eq:System} and \eqref{eq:policy}. Let Assumptions~\ref{assAPP:vertices}-\ref{assAPP:feasibility} hold and $\mathcal{CS}$ be the convex safe set defined in \eqref{eq:CS}. If the initial state $x_0 \in \mathcal{CS}$. Then, the data-based policy \eqref{eq:valueFuncEval} and \eqref{eq:policy} is feasible for all time $t\geq0$.
\end{prop}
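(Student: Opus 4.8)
The plan is to replay the induction argument from the proof of Proposition~1 (Feasibility), adding one new bookkeeping step to absorb the terminal portions of the stored trajectories into $\mathcal{CS}$. The induction hypothesis is that $x_t \in \mathcal{CS}$, so that Problem~\eqref{eq:valueFuncEval} is feasible at time $t$; the base case $t=0$ is the standing assumption $x_0 \in \mathcal{CS}$. Let $\bm\lambda_t^*$ be an optimal solution of~\eqref{eq:valueFuncEval}. Applying~\eqref{eq:policy} and using linearity of~\eqref{eq:System}, exactly as in Proposition~1, gives
\begin{equation*}
x_{t+1} = \sum_{j=0}^M \sum_{k=0}^{T_j} \lambda_{k|t}^{j,*}\,\big(A x_k^j + B u_k^j\big).
\end{equation*}
For $k < T_j$ we have $A x_k^j + B u_k^j = x_{k+1}^j \in \mathcal{SS}$, so those terms are already supported on stored states. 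The only difference from the regulation setting is the terminal terms $k=T_j$, where $A x_{T_j}^j + B u_{T_j}^j$ need not be $0$; here Assumption~\ref{assAPP:feasibility} guarantees $A x_{T_j}^j + B u_{T_j}^j \in \mathcal{X}_F$, and Assumption~\ref{assAPP:vertices} gives $\mathcal{X}_F = \text{Conv}\big(\cup_{i=0}^M x_{T_i}^i\big)$. Hence for every $j$ there exist coefficients $\mu_j^i \ge 0$ with $\sum_{i=0}^M \mu_i^j \cdot$ (sum over the second index) $=1$, i.e. $A x_{T_j}^j + B u_{T_j}^j = \sum_{i=0}^M \mu_i^j\, x_{T_i}^i$ with $\sum_{i=0}^M \mu_i^j = 1$.

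Using these $\mu_i^j$, I would build an explicit feasible point of~\eqref{eq:valueFuncEval} at time $t+1$ by reusing the index shift of~\eqref{eq:feasibleSol} for the non-terminal weights and adding the redistributed terminal mass:
\begin{equation*}
\bar\lambda_0^j = 0, \qquad \bar\lambda_{k}^j = \lambda_{k-1|t}^{j,*}\ \ (1 \le k \le T_j-1), \qquad \bar\lambda_{T_j}^j = \lambda_{T_j-1|t}^{j,*} + \sum_{i=0}^M \lambda_{T_i|t}^{i,*}\,\mu_j^i .
\end{equation*}
I would then verify the two equality constraints of~\eqref{eq:valueFuncEval}: the normalization follows because $\sum_{j=0}^M \mu_j^i = 1$ collapses the double sum back to $\sum_{i=0}^M \lambda_{T_i|t}^{i,*}$, so $\sum_{j,k}\bar\lambda_k^j = \sum_{j,k}\lambda_{k|t}^{j,*} = 1$; and the state constraint follows by substituting $\sum_{j=0}^M \mu_j^i\, x_{T_j}^j = A x_{T_i}^i + B u_{T_i}^i$, which recombines the shifted and redistributed terms into exactly $\sum_{j,k}\lambda_{k|t}^{j,*}(Ax_k^j+Bu_k^j)=x_{t+1}$. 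Nonnegativity of $\bar{\bm\lambda}$ is immediate. This shows $x_{t+1}\in\mathcal{CS}$ and that~\eqref{eq:valueFuncEval} is feasible at $t+1$, closing the induction; since $\mathcal{CS}\subseteq\mathcal{X}$, input- and state-constraint satisfaction for all $t\ge0$ follows as in Proposition~1.

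The only genuine obstacle relative to Proposition~1 is this redistribution step: re-expressing each terminal successor $A x_{T_j}^j + B u_{T_j}^j$ as a convex combination over $\cup_i x_{T_i}^i$ and tracking the multipliers $\mu_j^i$ through the normalization and state-consistency constraints; the rest is verbatim the linearity argument already given. It is worth highlighting where each appendix assumption enters: Assumption~\ref{assAPP:feasibility} is what places the terminal successor inside $\mathcal{X}_F$ in the first place, and Assumption~\ref{assAPP:vertices} --- that $\mathcal{X}_F$ \emph{equals} the convex hull of the stored terminal states rather than merely contains them --- is what makes the $\mu_j^i$ a legitimate convex combination supported on $\mathcal{SS}$, so that the redistributed candidate stays in $\mathcal{CS}$.
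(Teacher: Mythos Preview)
Your proposal is correct and follows essentially the same route as the paper's proof: the paper also writes each terminal successor $A x_{T_j}^j + B u_{T_j}^j$ as a convex combination of the stored terminal states (your $\mu_j^i$ are the paper's $\lambda_j^i$), defines $\tilde\lambda_j = \sum_{i} \lambda_{T_i|t}^{i,*}\lambda_j^i$, and sets $\bar\lambda_{T_j}^j = \lambda_{T_j-1|t}^{j,*} + \tilde\lambda_j$, exactly matching your candidate in~\eqref{eq:APPfeasibleSol}. If anything, your write-up is more explicit than the paper's in checking the normalization and state-consistency constraints and in spelling out which assumption is used where.
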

\begin{proof}
The proof follows from linearity of the system. \\
We assume that at time $t$ the system state $x_t \in \mathcal{CS}$, therefore the optimization problem \eqref{eq:valueFuncEval} is feasible. Let \eqref{eq:optimalSol} be the optimal solution to \eqref{eq:valueFuncEval}, then at the next time step $t+1$ we have 
\begin{equation*}
\begin{aligned}
         x_{t+1} &=  A x_t + B \sum_{j=0}^M\sum_{k=0}^{T_j} \lambda^{j,*}_{k|t} u^j_k  \\
    &  = A \sum_{j=0}^M\sum_{k=0}^{T_j} \lambda^{j,*}_{k|t} x^j_k + B \sum_{j=0}^M\sum_{k=0}^{T_j} \lambda^{j,*}_{k|t} u^j_k \\
    &  = \sum_{j=0}^M\sum_{k=0}^{T_j} \lambda^{j,*}_{k|t} (A  x^j_k + B u_k^j) \in \mathcal{CS}.
\end{aligned}
\end{equation*}
By Assumption~\ref{assAPP:feasibility} we have that for all $\forall j\in \{0, \ldots M\}$  it exist $\lambda^j_k \geq 0$ such that $\sum_{k=0}^M \lambda_k^j=1$ and
\begin{equation*}
\begin{aligned}
     \sum_{j=0}^M \lambda^{j,*}_{T_j|t} (A  x^j_{T_j} + B u_{T_j}^j)&= \sum_{j=0}^M \lambda^{j,*}_{T_j|t} \sum_{k=0}^M \lambda_k^j x_{T_k}^k \\
     &= \sum_{k=0}^M \sum_{j=0}^M \lambda^{j,*}_{T_j|t} \lambda_k^j x_{T_k}^k =  \sum_{k=0}^M \tilde \lambda_k x_{T_k}^k
\end{aligned}
\end{equation*} 
where $\forall k\in \{0, \ldots M\}$ we defined $\tilde \lambda_k = \sum_{i=0}^M \lambda^{i,*}_{T_i|t} \lambda_k^i$. It follows that
\begin{equation*}
    x_{t+1} =  \sum_{j=0}^M\sum_{k=0}^{T_j} \lambda^{j,*}_{k|t} (A  x^j_k + B u_k^j) = \sum_{j=0}^M\sum_{k=0}^{T_j} \bar \lambda^{j}_k x^j_k
\end{equation*}
where $\forall j\in \{0, \ldots M\}$ 
\begin{equation}
    \begin{aligned}\label{eq:APPfeasibleSol}
        &\bar \lambda_0^j = 0, \\
        &\bar \lambda_{k_j}^j = \lambda_{k_j-1|t}^{j,*}, \quad \quad \quad \quad \quad \forall k_j \in \{ 1, \ldots, T_j-1 \} \\
        &\bar \lambda_{T_j}^j = \lambda_{T_j-1|t}^{j,*} + \tilde \lambda_j
    \end{aligned}
\end{equation}
is a feasible solution to the optimization problem \eqref{eq:valueFuncEval} at time $t+1$.\\
By assumption we have at time $t=0$ the state $x_0 \in \mathcal{CS}$. Furthermore, we have shown that if at time $t$ the state $x_t \in \mathcal{CS}$, then at time $t+1$ the state $x_{t+1} \in \mathcal{CS}$ and the optimization problem \eqref{eq:valueFuncEval} is feasible. Therefore by induction we conclude that $x_t \in \mathcal{CS} \subseteq \mathcal{X}, ~\forall t \in \mathbb{Z}_{0+}$ and that the optimization problem \eqref{eq:valueFuncEval} is feasible $\forall t \in \mathbb{Z}_{0+}$.
\end{proof}

In order to prove convergence we make the following assumption on the stage cost.

\begin{assumption}\label{assAPP:cost}
The stage cost $h(\cdot, \cdot)$ is a continuous convex function and $\forall u \in \mathcal{U}$ it satisfies
\begin{equation}
\begin{aligned}
h(x,u) = 0, \forall x \in \mathcal{X}_F \textrm{ and}~ h(x,u) \succ 0 ~ \forall ~ x \in&~\rr^n \setminus \{\mathcal{X}_F\}. \notag
\end{aligned}
\end{equation}
\end{assumption}

\begin{prop}
\textit{(Convergence)} Consider the closed-loop system~\eqref{eq:System} and \eqref{eq:policy}. Let Assumption~\ref{assAPP:vertices}-\ref{assAPP:cost} hold and $\mathcal{CS}$ be the convex safe set defined in \eqref{eq:CS}. If the initial state $x_0 \in \mathcal{CS}$. Then, the origin of the closed-loop system \eqref{eq:System} and \eqref{eq:policy} is asymptotically stable.
\end{prop}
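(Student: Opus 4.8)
The plan is to mirror the Lyapunov argument of Proposition~2 almost verbatim, taking the $Q$-function defined through \eqref{eq:valueFuncEval} as a candidate Lyapunov function for the origin of the closed-loop system \eqref{eq:System} and \eqref{eq:policy}, and substituting the shifted-multiplier candidate \eqref{eq:feasibleSol} used there by its terminal-set analogue \eqref{eq:APPfeasibleSol} established in the preceding feasibility proposition. Concretely, I would verify the three discrete-time Lyapunov requirements in order: (i) continuity of $Q(\cdot)$ on $\mathcal{CS}$, (ii) positive definiteness of $Q(\cdot)$ with respect to the origin, and (iii) a strict decrease of $Q(\cdot)$ along the closed-loop trajectory for every $x_t\neq 0$. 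Steps (i) and (iii) transfer with essentially no change; step (ii) is where the terminal-set setting genuinely differs, and is where I expect the difficulty.

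For step (i), continuity follows exactly as in Proposition~2: $Q(\cdot)$ is the optimal value of the parametric linear program \eqref{eq:valueFuncEval}, hence piecewise affine and continuous on its domain $\mathcal{CS}$ by \cite[Chapter 7]{borrelli2017predictive}, and this argument is insensitive to whether the stored trajectories terminate at the origin or inside $\mathcal{X}_F$. For step (iii), I would reuse the chain \eqref{eq:lyapunovPart1}--\eqref{eq:lyapunovPart2} line for line, the sole modification being that the telescoped cost-to-go is bounded below by $Q(x_{t+1})$ through the candidate \eqref{eq:APPfeasibleSol} rather than \eqref{eq:feasibleSol}. The reindexing goes through because the stored terminal stage cost vanishes, $J^j_{T_j}(x^j_{T_j})=h(x^j_{T_j},u^j_{T_j})=0$ by Assumption~\ref{assAPP:cost} together with $x^j_{T_j}\in\mathcal{X}_F$, so the redistributed terminal weight $\tilde\lambda_j$ contributes zero cost; optimality of $Q(x_{t+1})$ then yields the inequality \eqref{eq:lyapunovPart2} and the per-step bound $Q(x_{t+1})-Q(x_t)\leq-\sum_{j=0}^{M}\sum_{k=0}^{T_j}\lambda^{j,*}_{k|t}\,h(x^j_k,u^j_k)$ as in \eqref{eq:LyapProof2}.

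The main obstacle is step (ii), and it is the reason this statement is sharper than mere convergence to the safe set. Under Assumption~\ref{assAPP:cost} the stage cost is zero on all of $\mathcal{X}_F$, so the decrease term $\sum_{j,k}\lambda^{j,*}_{k|t}h(x^j_k,u^j_k)$ is only guaranteed strictly positive while the weighted combination places mass on states outside $\mathcal{X}_F$, and $Q(\cdot)$ is a priori positive definite relative to $\mathcal{X}_F$ rather than relative to the origin as a point. To obtain asymptotic stability of the origin exactly as worded, I therefore have to close the gap between "the state enters $\mathcal{X}_F$" and "the state converges to the origin," which amounts to identifying the origin as the effective equilibrium reached inside $\mathcal{X}_F$.

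The route I would take is to argue that the origin is the stored terminal condition, so that the terminal-set construction collapses to the regulation setting: in that case $\mathcal{X}_F$ reduces to $\{0\}$, Assumption~\ref{assAPP:cost} recovers Assumption~\ref{ass:cost}, and one obtains $Q(0)=0$ with $Q(x)\succ 0$ for all $x\neq 0$ together with a strict decrease $Q(x_{t+1})-Q(x_t)<0$ for every $x_t\neq 0$. Combining positive definiteness from step (ii), continuity from step (i), and the strict decrease from step (iii), the standard discrete-time Lyapunov theorem then certifies that the origin of \eqref{eq:System} and \eqref{eq:policy} is asymptotically stable. Making this reduction precise --- so that the conclusion is genuinely stability of the origin and not only set stability of $\mathcal{X}_F$ --- is the delicate point on which I expect to spend the most care.
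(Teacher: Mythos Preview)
Your approach is exactly the paper's: its entire proof is the two-sentence remark that one repeats the Lyapunov argument of Proposition~2 with the candidate \eqref{eq:APPfeasibleSol} in place of \eqref{eq:feasibleSol}. The imprecision you flag in step~(ii) --- that under Assumption~\ref{assAPP:cost} the function $Q(\cdot)$ is positive definite only relative to $\mathcal{X}_F$, so the statement should really assert stability of $\mathcal{X}_F$ rather than of the origin --- is real and is not addressed by the paper either, so your proposal is already more careful than the original.
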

\begin{proof}
The proof follows from the proof of Proposition~2. In particular, the candidate solution~\eqref{eq:APPfeasibleSol} may be exploited to show that $Q(\cdot)$ is Lyapunov function along the closed-loop trajectory. 
\end{proof}

\begin{prop}
\textit{(Cost)} Consider the closed-loop system~\eqref{eq:System} and \eqref{eq:policy}. Let Assumptions~\ref{assAPP:vertices}-\ref{assAPP:cost} hold and $\mathcal{CS}$ be the convex safe set defined in \eqref{eq:CS}. If the initial state $x_0 \in \mathcal{CS}$. Then, the $Q$-function at $x_0$, $Q(x_0)$, upper bounds the cost associated with the trajectory of closed-loop system~\eqref{eq:System} and \eqref{eq:policy},
\begin{equation*}
    J\big(x_0\big) =   \sum_{k=0}^{\infty} h(x_k, u_k) \leq Q\big(x_0\big)
\end{equation*}
where $\{x_0 , \ldots, x_{t}, \ldots\}$ and $ \{u_0 , \ldots, u_{t}, \ldots\}$ are the closed-loop trajectory and associated input sequence, respectively. 

\end{prop}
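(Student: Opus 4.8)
The plan is to mirror the proof of \textit{Proposition~3}, with the terminal set $\mathcal{X}_F$ now playing the role of the origin. First I would recall that, by the feasibility result established above, $x_t\in\mathcal{CS}$ for all $t\geq 0$, so $Q(x_t)$ is well defined along the closed loop; and from the proof of the previous Proposition (Convergence) I would extract the key Lyapunov inequality obtained by using the candidate multipliers~\eqref{eq:APPfeasibleSol} as a feasible point of~\eqref{eq:valueFuncEval} at time $t+1$, namely, exactly as in~\eqref{eq:LyapProof2},
\begin{equation*}
Q(x_{t+1}) - Q(x_t) \leq -\sum_{j=0}^{M}\sum_{k=0}^{T_j}\lambda_{k|t}^{j,*}\,h(x_k^j,u_k^j).
\end{equation*}
Then, as in \textit{Proposition~3}, convexity of $h(\cdot,\cdot)$ together with the two equality constraints of~\eqref{eq:valueFuncEval} gives $\sum_{j=0}^{M}\sum_{k=0}^{T_j}\lambda_{k|t}^{j,*}h(x_k^j,u_k^j)\geq h(x_t,u_t)$, so that $Q(x_t)\geq h(x_t,u_t)+Q(x_{t+1})$ along the closed-loop trajectory.

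Next I would apply this inequality recursively from $t=0$, obtaining $Q(x_0)\geq\sum_{k=0}^{N-1}h(x_k,u_k)+Q(x_N)$ for every $N\geq 1$, and let $N\to\infty$. The remaining, and main, step is to establish $\lim_{N\to\infty}Q(x_N)=0$. In the regulation case this was immediate from asymptotic stability of the origin together with $Q(0)=0$; the subtlety here is that the closed-loop state only converges to the \emph{set} $\mathcal{X}_F$ and need not reach it in finite time, so two ingredients are needed: (i) $Q(\cdot)\equiv 0$ on $\mathcal{X}_F$, and (ii) $Q(\cdot)$ is continuous on $\mathcal{CS}$.

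For (i): by Assumption~\ref{assAPP:feasibility} every stored terminal state satisfies $x_{T_j}^j\in\mathcal{X}_F$, hence by Assumption~\ref{assAPP:cost} the associated terminal cost-to-go is $J_{T_j}^j(x_{T_j}^j)=h(x_{T_j}^j,u_{T_j}^j)=0$. Since $\mathcal{X}_F=\text{Conv}\big(\cup_{j=0}^{M} x_{T_j}^j\big)$ by Assumption~\ref{assAPP:vertices}, any $x\in\mathcal{X}_F$ is a convex combination of the $x_{T_j}^j$; using those multipliers as a feasible point of~\eqref{eq:valueFunc} shows $Q(x)\leq 0$, while $h(\cdot,\cdot)\geq 0$ forces $Q(x)\geq 0$, whence $Q(x)=0$. (Note also that $\mathcal{X}_F=\text{Conv}(\cup_j x_{T_j}^j)\subseteq\text{Conv}(\mathcal{SS})=\mathcal{CS}$, so $Q$ is indeed defined on $\mathcal{X}_F$.) Ingredient (ii) follows as in~\cite[Chapter~7]{borrelli2017predictive}, since $Q$ is the optimal value of the parametric linear program~\eqref{eq:valueFunc}. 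Combining (i), (ii), and the convergence of $x_N$ to $\mathcal{X}_F$ from the previous Proposition yields $\lim_{N\to\infty}Q(x_N)=0$; the telescoped inequality then gives $J(x_0)=\sum_{k=0}^{\infty}h(x_k,u_k)\leq Q(x_0)$. I expect this limiting argument to be the only genuine obstacle, the rest being a verbatim adaptation of the regulation proof.
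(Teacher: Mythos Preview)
Your proposal is correct and follows exactly the route the paper intends: the paper's own proof is the single line ``The proof follows as in Proposition~3,'' and you have reproduced precisely that argument while being considerably more careful about the one point the paper glosses over, namely why $\lim_{N\to\infty}Q(x_N)=0$ when the trajectory converges only to the \emph{set} $\mathcal{X}_F$ rather than to a single equilibrium. Your justification of that step via (i) $Q\equiv 0$ on $\mathcal{X}_F$ (using Assumptions~\ref{assAPP:vertices}--\ref{assAPP:cost}) and (ii) continuity of $Q$ on the compact set $\mathcal{CS}$ is sound and in fact fills a detail the paper leaves implicit.
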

\begin{proof}
    The proof follows as in Proposition~3.
\end{proof}

\bibliographystyle{IEEEtran}
\bibliography{IEEEabrv,mybibfile}

\end{document}